\newif\ifpreprint
\newenvironment{myquote}%
  {\smallskip \list{}{\leftmargin=0.2in\rightmargin=0.2in}\item[] \em}%
  {\endlist}
\newcommand{\pathoram}{{Pa\-th-OR\-AM\xspace}}
\newcommand{\travisworam}{{Hi\-VE-Wo\-ORAM\xspace}}
\newcommand{\ourworam}{{Det\-Wo\-ORAM\xspace}}
\patchcmd{\paragraph}{\@parfont}{\bfseries}{}{} 
\patchcmd{\paragraph}{\parindent}{0pt}{}{} 
\newcommand{\BT}{\begin{theorem}}
\newcommand{\ET}{\end{theorem}}
\newcommand{\BD}{\begin{definition}}
\newcommand{\ED}{\end{definition}}
\newcommand{\BCR}{\begin{corollary}}
\newcommand{\ECR}{\end{corollary}}
\newcommand{\BEX}{\begin{example}}
\newcommand{\EEX}{\end{example}}
\newcommand{\BL}{\begin{lemma}}
\newcommand{\EL}{\end{lemma}}
\newcommand{\BP}{\begin{proposition}}
\newcommand{\EP}{\end{proposition}}
\newcommand{\BCM}{\begin{claim}}
\newcommand{\ECM}{\end{claim}}
\newcommand{\BPF}{\begin{proof}}
\newcommand{\EPF}{\end{proof}}
\newcommand{\BEN}{\begin{enumerate}}
\newcommand{\EEN}{\end{enumerate}}
\newcommand{\BI}{\begin{itemize}}
\newcommand{\EI}{\end{itemize}}
\newcommand{\BO}{\begin{observation}}
\newcommand{\EO}{\end{observation}}
\newcommand{\BDS}{\begin{description}}
\newcommand{\EDS}{\end{description}}
\def\FF{{\mathbb F}}       %
\def\zo{{\{0,1\}}}    %
\def\from{{\,\leftarrow\,}}
\newcommand{\ceil}[1]{{\left\lceil{#1}\right\rceil}}
\newcommand{\floor}[1]{{\left\lfloor{#1}\right\rfloor}}
\newcommand{\secparam}{\ensuremath{\lambda}}
\newcommand{\ignore}[1]{}
\def\dec{\textsc{dec}}
\newcommand{\negl}{{\sf negl}}
\newcounter{defcounter}
\newlength{\protowidth}
\newcommand{\olrk}[1]{%
   \ifx\nursymbol#1\else\!\!\mskip4.5mu plus 0.5mu\left(#1\right)\fi}
\newcommand{\elrk}[1]{%
   \ifx\nursymbol#1\else%
        \!\!\mskip4.5mu plus0.5mu\left[\mskip2.5mu plus0.5mu #1\right]\fi}
\def\stash{\ensuremath{{\sf stash}}}
\def\path{\ensuremath{{\sf location}}}
\def\Access{\mathsf{PhysicalAcc}}
\def\Filter{\mathsf{WOnly}}
\numberwithin{equation}{section}
\title{Deterministic, Stash-Free Write-Only ORAM}
\newcommand{\usna}{\affiliation{
  \institution{United States Naval Academy}
  \city{Annapolis}
  \state{Maryland}
  \country{U.S.A.}
}}
\author{Daniel S. Roche}
\email{roche@usna.edu}
\author{Adam Aviv}
\email{aviv@usna.edu}
\author{Seung Geol Choi}
\email{choi@usna.edu}
\author{Travis Mayberry}
\email{mayberry@usna.edu}
\begin{document}

\begin{abstract}

  Write-Only Oblivious RAM (WoORAM) protocols provide privacy by
  encrypting the contents of data and also hiding the pattern of {\em
  write operations} over that data.  WoORAMs provide better privacy
  than plain encryption and better performance than more general ORAM
  schemes (which hide both writing \emph{and} reading access patterns),
  and the write-oblivious setting has been applied to important
  applications of cloud storage synchronization and encrypted hidden
  volumes.  In this paper, we introduce an entirely new technique for
  Write-Only ORAM, called \ourworam{}.  Unlike previous solutions,
  \ourworam{} uses a deterministic, sequential writing pattern
  without the need for any ``stashing'' of blocks in local
  state when writes fail. Our protocol, while conceptually
  simple, provides substantial improvement over prior solutions, both
  asymptotically and experimentally.  In particular, under typical
  settings the \ourworam{} writes only 2 blocks (sequentially) to
  backend memory for each block written to the device, which is
  optimal. We have implemented our solution using the BUSE (block
  device in user-space) module and tested \ourworam{} against both an
  encryption only baseline of dm-crypt and prior, randomized WoORAM
  solutions, measuring only a 3x-14x slowdown compared to an
  encryption-only baseline and around 6x-19x speedup compared to
  prior work.

\end{abstract}

\maketitle

\section{Introduction}

\paragraph{ORAM}
Even when data is fully encrypted, the sequence of \emph{which operations
have been performed} may be easily observed. This \emph{access pattern}
leakage is prevented by using Oblivious RAMs (or ORAMs), which are
protocols that allow a client to access files in storage without
revealing the sequence of operations over that data. ORAM solutions that
have been proposed provide strong privacy by guaranteeing that anyone
who observes the entire communication channel between client and backend
storage cannot distinguish any series of accesses from random.
Due to this strong privacy guarantee, ORAM has been used as a powerful tool in
various application settings such as secure cloud storage (e.g.,
\cite{NDSS:SteShi13,SP:SteShi13,NDSS:MayBlaCha14}),
secure multi-party computation (e.g.,
\cite{CCS:GKKKMR12,SP:LHSKH14,SP:LWNHS15,CCS:WanChaShi15,SP:ZWRGDE16}),
and secure processors (e.g., \cite{freecursive,ghostrider,hop}).

Unfortunately, in order to achieve this obliviousness, ORAM schemes
often require a substantial amount of shuffling during every access,
requiring more encrypted data to be transferred than just the data
being written/read. Even Path ORAM \cite{CCS:SDSFRY13}, one of the
most efficient ORAM constructions, has an $\Omega(\log N)$
multiplicative overhead in terms of communication complexity compared
to non-private storage.

\paragraph{WoORAM}
Write-only ORAM
(WoORAM)~\cite{EPRINT:LiDat13,CCS:BMNO14} introduces a relaxed
security notion as compared to ORAMs, where only the write pattern
needs to be oblivious. That is, we assume a setting in which
the adversary is able to see
the entire history of which blocks have been written to the backend, or
to view arbitrary snapshots of the storage, but the adversary cannot 
see which blocks are being read.

Every ORAM trivially satisfies the properties of WoORAM, but entirely
different (and possibly more efficient) WoORAM solutions are available
because
\emph{a WoORAM is by definition secure even if $read$s are not
oblivious}.
WoORAM schemes can be used in application settings where adversaries are unable
to gather information about physical $read$s. In such settings, the weaker
security guarantee of WoORAM still suffice to hide the access patterns from
the adversary of limited power.

{\em Deniable storage}~\cite{CCS:BMNO14} is one such application. In this
setting, a user has a single encrypted volume and may optionally have
a second, hidden volume, the existence of which the user wishes to be
able to plausibly deny. For example, a laptop or mobile device owner may
be forced to divulge their device decryption password at a border
crossing or elsewhere. The adversary may also be able to view multiple
snapshots of the disk, either at different times or through physical
forensic information remaining on the storage medium.
Even given \emph{every past state} of storage, an
adversary should not be able to guess whether the user has a second
hidden volume or not.
In this context, it is reasonable to assume that the
adversary won't get any information about block reads that have taken
place in the disk, since read operations do not usually leave traces on
the disk.  Based on this observation, a hidden volume encryption
(HiVE) for deniable storage was constructed based on
WoORAM~\cite{CCS:BMNO14}.

We proposed another example application in
\cite{oblivisync} for {\em synchronization based cloud
storage and backup services}. Here, the user holds the entire contents
of data locally, and uses a service such as DropBox to synchronize
with other devices or store backups.
The service provider or an eavesdropper on the network only observes
what the user writes to the synchronization folder, but does not see any
read operations as these are done locally without the need for network
communication.
We showed in \cite{oblivisync} showed that
WoORAMs can provide efficient protection in this scenario, as well as
protection against timing and file size distribution attacks.

In both cases, what makes WoORAMs attractive is that
they can achieve security much, much more efficiently than the full
read/write oblivious ORAMs such as \pathoram{}. For example, consider
storing $N$ size-$B$ data blocks in a non-recursive setting in which
the client has enough memory
to contain the entire position map of size $O(N\log N)$,
Blass~et~al.~\cite{CCS:BMNO14} provided a WoORAM construction
(hereafter, \travisworam) with {\em optimal} asymptotic communication
overhead of $O(B)$ and negligible stash overflow probability. As a
comparison, fully-functional read/write ORAM schemes --- again, even
without the position map --- have an
overhead of $\Omega(B \log N)$.

\paragraph{Towards better efficiency with realistic client memory}
Although \travisworam{} has a better asymptotic communication
complexity than \pathoram{} in the non-recursive setting (i.e., with
client memory of size $O(N \log N)$), the situation is different
when the size of the \emph{unsynchronized} client memory is smaller (i.e., polylogarithmic in
$N$). This could be because the client really has less memory, or
because the state needs to be synchronized frequently (as in a
multi-user setting).
In this case, the client cannot maintain the entire
position map in memory, and so the position map storage needs to be
outsourced to the server as another WoORAM. This encoding typically
occurs via a recursive process, storing the position map in
recursively smaller WoORAMS, until the final WoORAM is small enough to
fit within client memory.
Therefore, in the uniform block setting where every storage block has
the same size,
both \travisworam{} and \pathoram{} have the same overhead $O(B\log^2 N)$
with poly-logarithmic client memory size.\footnote{The multiplicative overhead
$O(\log^2 N)$ can be reduced to additive overhead of $O(\log^3 N)$ if the size
of the block can be non-uniform~\cite{CCS:SDSFRY13,CCS:BMNO14}. However,
throughout the paper we will consider the uniform block setting, since the two
use cases we consider above assume uniform block sizes. We note that our
construction still has better additive overhead of $O(\log^2 N)$ even in the
non-uniform block setting.}

Hence, we ask the following question:

\begin{myquote}
Can we achieve WoORAM with better asymptotic communication complexity in the
setting of polylogarithmic client memory and uniform blocks?
\end{myquote}

\begin{table*}[t]
\ifpreprint
\makebox[\textwidth][c]{
\else
\begin{center}
\resizebox{0.8\textwidth}{!}{%
\fi
\begin{tabular}{c|cc|cc|c|c|}
\hline
 & \multicolumn{2}{c|}{Logical Read} & \multicolumn{2}{c|}{Logical Write} & Unsynchronized & Security \\
 \cline{2-3}\cline{4-5}
 & Physical Read & Physical Write & Physical Read & Physical Write & Client Memory &~ \\
\hline
\pathoram{} \cite{CCS:SDSFRY13}
  & $O(B\log^2 N)$ & $O(B\log^2 N)$ & $O(B\log^2 N)$ & $O(B\log^2 N)$
  & $\omega(B \log N)$ & RW\\
\travisworam{} \cite{CCS:BMNO14}
  & $O(B\log N)$ & 0 & $O(B\log^2 N)$ & $O(B\log N)$
  & $\omega(B \log N)$ & W-only \\
\ourworam{}
  & $O(B\log N)$ & 0 & $O(B\log N)$ & $2B$
  & $O(B)$ & W-only \\
\hline
\end{tabular}
\ifpreprint
\else
}
\end{center}
\fi
\ifpreprint
}
\scriptsize
\fi
\centering
$B$ denotes the block size (in bits), and $N$ denotes the number
of logical blocks. We assume $B = \Omega(\log^2 N)$.
\caption{Communication complexity and client memory size for various ORAMs in the uniform block setting
\label{tab:orams}}
\end{table*}

\subsection{A Deterministic Approach to WoORAMs}
In answering the question above, observe that the security requirement
of WoORAMs are {\em much weaker} than that of ORAMs. Namely, only the
write operations need to be oblivious, and the read operations can
occur using different protocols than that of writing. This opens the
door to a {\em radically different approach} toward constructing a
WoORAM scheme.

\paragraph{Traditional approaches.}
Traditionally, in ORAM schemes as well as WoORAM, to write data $d$, the
oblivious algorithm selects $k$ blocks in some random process
storage in order to write. In \pathoram{}, those $k$ blocks form a
path in a tree, while in \travisworam{}, they are uniformly sampled from
a flat storage array of blocks.  All $k$ blocks are re-encrypted,
and the new block $d$ is inserted if any existing blocks are empty.

One of the challenges with this approach is that there is the
possibility for a write to fail if none of the random $k$ blocks are
empty and thus $d$ cannot be inserted. Instead $d$ is placed into a
{\em stash} in
reserved client memory until it may be successfully written to the
ORAM (or WoORAM) when two or more of the $k$ blocks are
empty. Fortunately, the probability of this event is bounded, and thus
the size of the stash can also be bounded with negligible stash
overflow probability. The schemes will, with overwhelming probability,
work for small client memory.

\paragraph{Main observations.}
After careful inspection of the security proofs, we discovered that
{\em random slots are mainly
used to hide read accesses, not write accesses!} That is, the
challenge for ORAMs is that successive reads of the same data must occur
in a randomly indistinguishable manner. For example, without the
technique of choosing random slots, two logical reads on the same
address may result in reading the same physical address twice, in
which case the read accesses are not oblivious.
In the WoORAM setting, however, {\em the scheme may still be secure even
if reads are not oblivious}, since the security requirement doesn't
care about physical reads!  Based on this observation, we ask:

\begin{myquote}
Can we construct a deterministic WoORAM scheme using a radically different framework?
\end{myquote}

\subsection{Our Work: \ourworam{}}
We answer both of the above questions affirmatively. In what
follows, we describe the main features and contributions of \ourworam{}.

\paragraph{Deterministic, sequential physical accesses}
\ourworam{} departs from the traditional approach in constructing a
WoORAM scheme in that the write pattern is deterministic and
sequential.  Roughly speaking, if some logical $write$ results in writing
the two physical blocks $i$ and $j$, the next logical $write$ will result in
writing in physical blocks $(i+1) \bmod N$ and $(j+1) \bmod M$, where
$M$ is a parameter in the system.

\paragraph{No stash}
The deterministic nature of the physical writes also implies that a
stash is no longer needed. A write will {\em always succeed} and always
occurs in a free block.  Therefore, we were able to {\em remove the
notion of stash completely} in our scheme. To elaborate this point,
we give a very simple toy construction that captures these aspects in
Section~\ref{sec:toy}.
Due to the deterministic access pattern and the absence of stash, security
analysis of our scheme is extremely simple.

\paragraph{{\em Optimal} communication complexity of physical writes}
Each logical read or write operation from the client's end results in
some physical reads and/or writes to backend memory. In the
\emph{uniform block setting}, we assume there is a block size $B$,
presumably dictated by the underlying medium, and that all reads and
writes must occur in multiples of $B$. The communication complexity is
then the total number of bytes transferred for a given operation, which
necessarily is a multiple of $B$.

\ourworam{} has better asymptotic communication complexity than previous
constructions (see Table~\ref{tab:orams}). In particular, \ourworam{}
improves the complexity of write operations compared to \travisworam{}
by a factor of $\log N$. Note that, even though read operations are
assumed to be hidden from an observer, the asymptotic cost of reads is
still very significant for practical performance.

We stress that to the best of our knowledge, \ourworam{} is {\em the first
WoORAM} construction whose physical write cost is $2B$ for a single logical
write. In other words, the physical-write overhead is {\em a single block
(additive!)}, which is {\em exactly optimal in the uniform block setting with
small client storage}.

\paragraph{Optimization techniques: ODS and packing}
We applied two optimization techniques to further reduce the
communication complexity and improve practical performance.

First, we created a new write-only oblivious data structure (ODS),
in the form of a Trie, to function as the position map.
As with previous tree-based ODS schemes~\cite{CCS:WNLCSS14,SP:RocAviCho16}, our
ODS scheme avoids recursive position map lookups by employing a {\em
pointer-based technique}. That is, the pointer to a child node in a Trie node
{\em directly points to a physical location} instead of a logical location, and
therefore it is no longer necessary to translate a logical address to a physical
address within the Trie itself. We note that the ODS idea has previously
been applied to WoORAM by \cite{PETS:ChaCheSio17}, although their overall
scheme turns out the be insecure (see Section~\ref{sec:attack}).

With the simpler position map stored as a Trie and the deterministic
write-access pattern in \ourworam{}, we can manipulate the parameters to
optimize the procedure with \ourworam{}. In particular, we will show how to
pack write-updates of the position map Trie into block size chunks.  With
additional interleaving techniques, we will show that we can achieve a minimal
communication complexity of $2B$, one block for the data and one block for
position map and other updates.  The details of these techniques are described
in Section~\ref{sec:trie} and~\ref{sec:packing}.

\paragraph{Stateless WoORAM}
WoORAM is usually considered in a single-client model, but it is sometimes
useful to have multiple clients accessing the same WoORAM.
In a multi-client setting, \emph{even if the client has a large amount
of local memory available}, our improvements in local storage of
eliminating stash and optimizing the position map are significant.

Because our scheme
has no stash, we can convert our scheme to a stateless WoORAM with no overhead
except for storing the encryption key and a few counter variables.
On the other hand, previous schemes such as \pathoram{} and \travisworam{} must
maintain the stash of size $\omega(B \log N)$, which in the stateless
setting must be transferred
in its entirety on each write operation in order to maintain
obliviousness.

\paragraph{Less randomness and storage for IVs}
The deterministic and sequential access pattern fits nicely with encryption of
each block using \emph{counter mode}. Suppose the previous writes so far have
cycled the physical storage $i$ times, and physical block $j$ is about to be
encrypted.  Then, the block can be encrypted with the counter $i \| j \|
0^\ell$, where $\ell$ depends on how many times one needs to apply a block
cipher to encrypt the entire block. That is, we can get
indistinguishable symmetric encryption by storing just a single IV.

We stress that the above optimization cannot be applied to previous schemes due
to the randomized procedure.  For example, \travisworam{} chooses instead
random IVs to encrypt each block.  These IVs must be stored separately on the
server adding to the communication cost overhead.

Additionally, we remark that the implementation~\cite{CCS:BMNO14}
of~\travisworam{} used RC4 as a PRG to choose random IVs for
performance reasons, but the insecurity of RC4 lead to an attack on the
implementation~\cite{ASIACCS:PatStr15}.

\paragraph{Implementation and experiments}
We have implemented \ourworam{} using C++ and BUSE ({\em block device
  in user space}). We tested our implementation using Bonnie++ and fio on both
a spinning-platter hard drive and a solid state drive, comparing the
implementation to a baseline that performs encryption only (no
obliviousness) as well as to an implementation of \travisworam{}.
We found that \ourworam{} incurs only a 3x-14x
slowdown compared to an encryption-only baseline and around 6x-19x
speedup compared to \travisworam{}.

\paragraph{Insecurity of other proposed WoORAM improvement}
DataLair~\cite{PETS:ChaCheSio17} is another
WoORAM scheme that has been proposed recently with the
goal of improving the practical performance compared to \travisworam{}.
As our secondary contribution, we analyzed this WoORAM protocol,
which achieves faster writes by tracking empty blocks within the
WoORAM.
We show in Section~\ref{sec:attack} that,
unfortunately, the construction does not satisfy
write-only obliviousness.

\section{Background}

\subsection{Write-only ORAM}
\label{sec:woramdef}
\paragraph{ORAM}
An Oblivious RAM (ORAM) allows a client to store and manipulate an array of $N$
blocks on an untrusted, honest-but-curious server without revealing the data or access patterns to
the server. Specifically, the (logical) array of $N$ blocks is indirectly
stored into a specialized backend data structure on the server, and an ORAM
scheme provides an access protocol that implements each logical access with a
sequence of physical accesses to the backend structure. An ORAM scheme is
secure if for any two sequences of logical accesses of the same length, the
physical accesses produced by the access protocol are computationally
indistinguishable.

More formally, let $\vec{y} = (y_1, y_2, \ldots)$ denote a sequence of
operations, where each $y_i$ is a $read(a_i)$ or a $write(a_i, d_i)$; here,
$a_i \in [0, N)$ denotes the logical address of the block being read or
written, and $d_i$ denotes a block of data being written. For an ORAM scheme
$\Pi$, let $\Access^\Pi(\vec{y})$ denote the physical access pattern that its
access protocol produces for the logical access sequence $\vec y$.  We say the
scheme $\Pi$ is {\em secure} if for any two sequences of operations
$\vec{x}$ and $\vec{y}$ of the same length, it holds
$$ \Access^\Pi(\vec{x}) ~~ \approx_c ~~ \Access^\Pi(\vec{y}), $$
where $\approx_c$ denotes computational
indistinguishability (with respect to the security parameter $\secparam$).

Since the seminal work by Goldreich and Ostrovsky~\cite{GO96},
many ORAM schemes have been proposed and studied in the literature; see
Section~\ref{sec:related} for more related work.

\paragraph{WoORAM}
Blass~et~al.~\cite{CCS:BMNO14} considered a relaxed security notion of
write-only ORAM (WoORAM), where only the $write$ physical accesses are required to
be indistinguishable. In particular, we say an ORAM scheme $\Pi$ is {\em
write-only oblivious} if for any two sequences of logical accesses $\vec{x}$
and $\vec{y}$ containing the same number of $write$ operations, it holds
$$\Filter(\Access^\Pi(\vec{x})) ~~
    \approx_c ~~ \Filter(\Access^\Pi(\vec{y})), $$
where $\Filter$ denotes a function that filters out the read physical accesses
but passes the write physical accesses.

They also gave an WoORAM construction which is much more efficient than full
ORAM constructions. We will briefly describe their construction below.

\subsection{\travisworam}
\paragraph{Setting}
In \cite{CCS:BMNO14}, to store $N$ logical blocks, the server needs a physical
array $D$ of $M \ge 2N$ elements, where each element is a pair $(a, d)$, where
$a$ is the logical address and $d$ is the actual data. Obviously, all the data
in the backend storage $D$ is encrypted with an IND-CPA encryption scheme;
throughout the paper, we will implicitly assume that the backend data is
encrypted with an IND-CPA encryption scheme, even if we don't use any
encryption notations.

The client maintains a buffer, called $\stash$, that temporarily holds the
blocks yet to be written to $D$.  We assume for now that the client also
maintains the position map $pos$ in its memory; the map $pos$ translates a
logical address into the corresponding physical address.

This protocol depends crucially on parameter $k$, the number of physical
writes per logical write. This is selected to ensure a very low
probability of filling up the stash;
according to~\cite{CCS:BMNO14}, for $k=3$, the
probability of having more than $50$ items in the stash at any given
time is bounded by
$2^{-64}$.

\paragraph{Write algorithm}
The access protocol for $write(a,d)$ works as follows.
\BEN
\item Insert $(a, d)$ into $\stash$

\item Choose $k$ physical addresses $r_1,\ldots,r_k$ uniformly at random
  from $[0,M)$.

\item For $i \in \{1,\ldots,k\}$ do:
    \BEN
    \item Determine whether $D[r_i]$ is free by checking whether
      $pos[D[r_i].a] \ne r_i$.

    \item If $D[r_i]$ is free and $\stash$ is nonempty,
    remove an element $(\alpha,\delta)$ from stash,
    set $D[r_i] \gets \delta$, $D[r_i].a \gets \alpha$, and
    update the position map
    $pos[\alpha] \gets r_i$.

    \item
    Otherwise, rewrite $D[r_i]$ under a new random IV.
    \EEN
\EEN

\paragraph{Communication complexity}
Let $M = O(N)$.  Without considering the position map, their access protocol for write has
fantastic communication complexity of $O(k(\log N + B))$, where $B$ is the size
of a data block. In particular, with $k=3$ and assuming $B = \Omega(\log N)$,
the communication complexity is $O(B)$.
However, the size of the position map is $\Omega(N \cdot \log N)$, which is usually
too large for the client to store in memory. This issue can be addressed by
recursively storing the map in smaller and smaller ORAMs. Taking these recursion
steps into account, the eventual communication complexity becomes $O(B \log^2 N)$.

\section{Deterministic WoORAM Design}
In this section, we describe the algorithm for \ourworam{}
construction. We begin by first describing a ``toy construction'' that
has some of the key properties as the final algorithm, such as not
employing a stash while using a deterministic write pattern. From this
toy construction, we make a series of improvements that lead to our
actual \ourworam{} construction with sequential write pattern and $2B$
communication cost per write.

\subsection{A {\em Toy} Deterministic WoORAM Construction} \label{sec:toy}
The {\em toy} deterministic WoORAM construction is inspired by the
{\em square-root} ORAM solution of Goldreich and Ostrovsky
\cite{GO96}, adapted to the write-only oblivious setting. For now,
we set aside the issue of the position map, which one could consider
being stored by the user locally and separately from the primary
procedure. Later, we will describe a method for storing the position
map within an adjacent, smaller WoORAM.

\paragraph{Toy Construction.}
Physical storage consists of an array $D$ of $2N$
data blocks of $B$ bits each.
$D$ is divided into two areas, a {\em main
area} and a {\em holding area} (see Figure~\ref{fig:toy}), where each
area contains $N$ blocks.

The key invariants of this construction, which will continue even with
our complete non-toy construction later, are:
\begin{itemize}[nosep]
  \item Every block in the main area is stored \emph{at its actual address};
    therefore the main area does not need any position map.
  \item Every block in the holding area is overwritten only after it has
    been copied to the main area.
\end{itemize}

\tikzset{
w/.style={draw,rectangle,text height=8pt,text width=100pt,align=center,fill=white},
g/.style={draw,rectangle,text height=8pt,text width=100pt,align=center,fill=gray!30},
}

\begin{figure}[h]
\small
\centering
\noindent\begin{tikzpicture}[node distance=0]
\def\lnode#1#2{%
\node[w, label=center:{#2}] (#1) {}}
\def\lnodex#1#2#3{%
\node[g, right = 0pt of #1, label=center:{#3}] (#2) {}}

\lnode{a}{main area};
\lnodex{a}{b}{holding area};

\draw[thick, <->] (-53pt,-16pt) -- (52pt,-16pt) node[below left = 2pt and 40pt] {$N$ blocks };
\draw[thick, <->] (53pt,-16pt) -- (159pt,-16pt) node[below left = 2pt and 40pt] {$N$ blocks };
\end{tikzpicture}

\caption{Physical data array $D$ for the toy construction\label{fig:toy}.}
\end{figure}

Each block of the storage area has an address $a$, and a user
interacts with the system by writing data to an address within the
main area, that is $a \in [0,N)$, and reading data back by referring
to the address. In the main area, a block is always stored at its
address, but in the holding area, blocks are appended as they are
written, irrespective of their address.

In order to track where to write the next block in the
holding area, we keep a counter $i$ of the number of write operations
performed so far.  Additionally, as the holding area is not
ordered, there needs to be a position map that associates addresses
$a\in[0,N)$
to the location in $[0,2N)$ of the freshest data associated with that address,
either at the address within the main area or a more recent write to
the holding area. The position map construction will be described later
as a write-only oblivious data structure stored with in an adjacent,
smaller WoORAM. For now, we
abstract position map as
a straightforward key-value store with operations
$\textsc{getpos}(a) \rightarrow a'$ and $\textsc{setpos}(a,a')$

With parameter $N$, counter $i$, and the WoORAM storage array $D$,
where $|D|=2N$, we can now define the primary operations of the toy
WoORAM as in Algorithm~\ref{alg:toy}. Note that the \textsc{read}
operation, which for now is trivial and which may seem irrelevant for a
write-only ORAM, is crucial to the practical performance. As we progress
to more sophisticated WoORAM schemes, both the \textsc{read} and
\textsc{write} operations will necessarily become more intricate.

\algnewcommand{\IIf}[1]{\State\algorithmicif\ #1\ \algorithmicthen}
\algnewcommand{\EndIIf}{\unskip\ \algorithmicend\ \algorithmicif}

\algrenewcommand{\algorithmiccomment}[1]{$//$\textit{#1}}
\begin{algorithm}[h]
\small
  \caption{Operations in Toy Deterministic WoORAM}
  \label{alg:toy}
  \begin{algorithmic}

  \State \Comment{Perform the $i$-th write, storing data $d$ at address $a$}
  \Function{write${}_{i}$}{$a,d$}
  \State $D[N + (i \bmod N) ] := \textsc{enc}(d)$ \Comment{Write to holding area}
  \State $\textsc{setpos}(a,N + (i \bmod N))$ \Comment{Update position map}
  \State $i := i + 1$ \Comment{increment counter}
  \\
  \State \Comment{Refresh the main area}
  \If { $i \bmod N = 0 $} 
  \For{$a \in [0,N)$}
  \State $D[a] := \textsc{enc}(\textsc{dec}(D[\textsc{getpos}(a)]))$
  \State $\textsc{setpos}(a,a)$ \Comment{Update the position map}
  \EndFor
  \EndIf
  \EndFunction
  \\

  \State \Comment{Read and return data at address $a$}
  \Function{read}{$a$}

  \Return $\textsc{dec}(D[\textsc{getpos}(a)])$  \Comment{Return freshest version of the data}
  \EndFunction

\end{algorithmic}
\end{algorithm}

\paragraph{Properties of toy construction.}
Already, our toy construction has some of the important
of the properties of our final construction. As explained below, it
provides write obliviousness, it is deterministic, and it does not
require a stash.

To see why the toy construction is write-oblivious,
first consider that each write occurs sequentially in the
holding area and has no correspondence to the actual address of the
data. Writing to the holding area does not reveal the address of the
data. Second, once the holding area has been filled completely with
the freshest data, after $N$ operations, {\em all} the main area
blocks are refreshed with data from the holding area, or re-encrypted
if no fresher data is found in the holding area.  Since all the main
area blocks are written during a refresh, it is impossible to
determine which of the addresses occurred in the holding area. In both
cases, for a write to the holding area and during a refresh, the block
writes are oblivious.

The toy construction also has a \emph{deterministic} write pattern:
the $i$-th write always touches the holding area block at index
$N + (i \bmod N)$. As compared to previous ORAM
and WoORAM schemes, in which
writing (or access) requires randomly selecting a set (or path) of
blocks to overwrite with the expectation that at least one of the
blocks has the requisite space to store the written data, our
construction does not require any random selection and operates in a
completely deterministic manner.

Further, as each write is guaranteed to succeed---we always write
sequentially to the holding area---there is no need for a stash. To
the best of our knowledge, all other WoORAM schemes require a stash to
handle failed write attempts.
In some sense, one can think of the
stash in these systems as providing state information about the
current incomplete writes, and to have a stateless system the full size
of stash would need to be transferred on every step (even if there is
nothing in it).
By contrast, our construction has constant state cost (ignoring the
position map for now), which consists
simply of the counter $i$ and the encryption key. Our construction
continues to have constant unsynchronized client state even when we
consider the de-amortized case with position map below.

\subsection{De-amortizing the toy construction}
We can advance upon the toy construction above by further generalizing
the storage procedure via de-amortization of the refresh procedure as
well as allowing the main and holding area to be of different
sizes. The key idea of de-amortization is that instead of refreshing
the main area once the holding area has been fully written, we can
perform a few writes to the main area for each write to the
holding area, so that it is fully refreshed at the same rate.

In this generalized setting, physical storage consists of a main area of
size $N$ as before, and a holding area of size $M$, where $M$ is
arbitrary, so that $|D| = N + M$.

\tikzset{
w/.style={draw,rectangle,text height=8pt,text width=150pt,align=center,fill=white},
g/.style={draw,rectangle,text height=8pt,text width=50pt,align=center,fill=gray!30},
}

\begin{figure}[h]
\small
\centering
\noindent\begin{tikzpicture}[node distance=0]
\def\lnode#1#2{%
\node[w, label=center:{#2}, minimum width=120pt] (#1) {}}
\def\lnodex#1#2#3{%
\node[g, right = 0pt of #1, label=center:{#3},minimum width=10pt] (#2) {}}

\lnode{a}{main area};
\lnodex{a}{b}{holding area};

\draw[thick, <->] (-80pt,-16pt) -- (77pt,-16pt) node[below left = 2pt and 40pt] {$N$ blocks };
\draw[thick, <->] (78pt,-16pt) -- (134pt,-16pt) node[below left = 2pt and 10pt] {$M$ blocks };
\end{tikzpicture}
\caption{Back end data array $D$ with unequal main and holding areas.}
\end{figure}

The key to the de-amortized write procedure is that there needs to be a
commensurate number of refreshes to the main area for each write to
the holding area. After any consecutive $M$ writes to the holding area,
the entire main area (of size $N$) needs
to be refreshed, just like what would happen in the amortized toy
construction.

When $N=M$, this is simply accomplished by performing one refresh for
each write. When the sizes are unequal, we need to perform on average
$N/M$ refreshes per write to achieve the same goal. For example,
consider the case where $N=2M$, where the main area is twice as large
as the holding area, then $N/M=2$, and thus we perform two
refreshes for every write. After $M$ writes to the holding area, the
entire main area will have been refreshed.

It is also possible to have ratios where $M>N$, such as $M=2N$ where
the main area is half the size of the holding area, and in fact, this
setting and $M=N\cdot\ceil{\log(N)}$ are both critical settings for
performance. When $M>N$ this implies that we need to do {\em less}
than one refresh per write, on average. Specifically for $N/M = 1/2$,
we perform a refresh on \emph{every other write} to the holding area.

Algorithm~\ref{alg:deam-NM} has the properties of
performing on average $N/M$ refreshes per write, while the
read operation is the same as before.

\algrenewcommand{\algorithmiccomment}[1]{$//$\textit{#1}}
\begin{algorithm}[h]
\small
  \caption{De-amortized write operation with unequal size main and
  holding areas.}
  \label{alg:deam-NM}
  \begin{algorithmic}

  \State \Comment{Perform the $i$-th write, storing data $d$ at address $a$}
  \Function{write}{$a,d$}
  \State $D[N + (i \bmod M) ] := \textsc{enc}(d)$ \Comment{Write to holding area}
  \State $\textsc{setpos}(a,N + (i \bmod M))$ \Comment{Update position map}
  \\

  \State \Comment{Refresh $N/M$ main area blocks per-write}
  \State $s := \lfloor  i \cdot   N/M \rfloor  \bmod N$
  \State $e :=  \lfloor (i+1) \cdot  N/M  \rfloor  \bmod N$
  \For {$a' \in [s,e)$}{}
  \State $D[a'] := \textsc{enc}(\textsc{dec}(D[\textsc{getpos}(a')]))$
  \State $\textsc{setpos}(a',a')$
  \EndFor
  \State $i := i + 1$ \Comment{increment counter}
  \EndFunction
\end{algorithmic}
\end{algorithm}

It is straightforward to see that the unequal size,
de-amortized solution has the same key properties as the toy
construction: it is write-oblivious, deterministic,
and does not require
a stash. It is clearly deterministic because just as before, writes
and refreshes occur sequentially in the holding area and main area,
respectively, and this also assures write-obliviousness for the same
reasons discussed before. It still does not require a stash because
every write will succeed, as the refresh pattern guarantees that the
next write to the holding area will always overwrite a block that has
already had the chance to be refreshed to the main area.

The only non-obvious fact may be the \emph{correctness} of the scheme.
In particular, is it possible for some write to the holding area to
overwrite some other block which has not yet been refreshed to the main
area? The following lemma justifies that such a situation cannot happen.

\begin{lemma}
  Consider Algorithm~\ref{alg:deam-NM}.
  For any time $i$ and address $a$, there exists a time $i'$
  when address $a$ is refreshed to the main area satisfying
  $$i \le i' < i+M.$$
\end{lemma}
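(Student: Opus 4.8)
The plan is to track the ``unrolled'' refresh counter of Algorithm~\ref{alg:deam-NM} and show that over any $M$ consecutive writes it advances by exactly $N$ positions, hence (modulo $N$) sweeps across every main-area address.

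Concretely, I would set $R_j := \lfloor j\cdot N/M\rfloor$ for each write index $j$, so that $R_j$ is nondecreasing in $j$ and the refresh loop of write $j$ (the range $[s,e)$ in the algorithm) touches exactly the addresses $R_j, R_j+1, \ldots, R_{j+1}-1$, all reduced modulo $N$. This set is empty precisely when $R_{j+1}=R_j$ (which happens only when $M>N$) and always has at most $\lceil N/M\rceil \le N$ elements, so it is an unambiguous contiguous circular block of addresses; moreover the value $e$ used by write $j$ equals the value $s$ used by write $j+1$, i.e.\ consecutive blocks abut. Then fix a time $i$ and consider the writes $i, i+1, \ldots, i+M-1$: the address sets they refresh are the abutting blocks $[R_j,R_{j+1})$ for $i\le j<i+M$, whose union before reduction modulo $N$ is the single interval $[R_i, R_{i+M})$. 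Since $N$ is an integer,
$$ R_{i+M} = \left\lfloor \frac{(i+M)N}{M}\right\rfloor = \left\lfloor \frac{iN}{M}+N\right\rfloor = R_i+N, $$
so this interval is $\{R_i, R_i+1,\ldots, R_i+N-1\}$, exactly $N$ consecutive integers; reduced modulo $N$ it is all of $\{0,1,\ldots,N-1\}$. Hence every address $a$ lies in the refresh set of some write $i'$ with $i\le i'\le i+M-1<i+M$, which is the claim.

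The only delicate point is the bookkeeping in the second step: verifying that the literal index arithmetic in the algorithm --- the $\bmod N$ reductions of $\lfloor jN/M\rfloor$ and $\lfloor(j+1)N/M\rfloor$, and the iteration over $[s,e)$ with wraparound past $N-1$ --- really does coincide with the clean description ``write $j$ refreshes the residues of $\{R_j,\ldots,R_{j+1}-1\}$ modulo $N$.'' Once that identification is pinned down, everything else is the elementary counting above, and no probabilistic or data-structural reasoning is needed; this is why, as the paper notes, the correctness (like the security) of the de-amortized scheme is essentially immediate.
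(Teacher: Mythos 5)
Your proof is correct, and it takes a genuinely different route from the paper's. The paper fixes the address $a$ and inverts the floor arithmetic: it observes that $a$ is refreshed at time $i'$ exactly when $a$ falls in the window $[\floor{i'N/M},\floor{(i'+1)N/M})$ (reduced modulo $N$), rewrites this as a condition on $i'\bmod M$ (roughly $i'\bmod M \ge \floor{Ma/N}\bmod M$), and then notes that any $M$ consecutive times $\{i,\ldots,i+M-1\}$ run through every residue modulo $M$, so some $i'$ in that window satisfies the condition. You instead fix the window of writes and unroll the refresh pointer: with $R_j=\floor{jN/M}$ the refresh intervals $[R_j,R_{j+1})$ abut, and telescoping over $j=i,\ldots,i+M-1$ gives exactly $[R_i,R_i+N)$ since $R_{i+M}=R_i+N$, which covers every residue modulo $N$. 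Your version buys a cleaner statement (it avoids the paper's somewhat informal ``inequality modulo $M$ on both sides'' step, where the displayed conditions in the paper even appear to have a $\bmod\,M$/$\bmod\,N$ slip) and it makes explicit a stronger fact the paper leaves implicit, namely that each address is refreshed \emph{exactly once} in any $M$ consecutive writes, since the total refresh count over the window is exactly $N$. The one point you flag yourself --- that the literal loop over $[s,e)$ with $s,e$ reduced modulo $N$ must be read with wraparound, i.e.\ as the residues of $\{R_j,\ldots,R_{j+1}-1\}$ --- is a reading the paper's own proof also assumes silently, so it is not a gap specific to your argument; spelling it out as you propose would complete the proof.
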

\begin{proof}
  Address $a$ is refreshed to main area whenever the current time $i'$
  is in the range $[s,e)$ in the for loop; namely, when
  $$\floor{\frac{i'N}{M}}\bmod M \le a < \floor{\frac{(i'+1)N}{M}} \bmod M.$$
  This happens as soon as
  $$i' \bmod M \ge \floor{\frac{Ma}{N}} \bmod M.$$
  Because this is an inequality modulo $M$ on both sides, there exists
  some $i'\in\{i,i+1,\ldots,i+M-1\}$ which satisfies it.
\end{proof}

The consequence of this lemma is that, for any time $i$, the data which
is placed in the holding area at address $N+(i\bmod M)$ will be
refreshed to the main area before time $i+M$, which is the next time
holding address $N+(i\bmod M)$ will be overwritten. Therefore no data is
overwritten before it is refreshed to the main area, and no stash is
needed.

\subsection{Incorporating the Position Map}
In this section, we consider methods for implementing a position map
for \ourworam{}, and crucially, modifying the procedure so that only a
\emph{single position map update per write} is needed.

We first describe how to modify our algorithm so that we can store the
position map recursively within successively smaller \ourworam{}s, and
then show how to further improve by using a Trie-based
\emph{write-only oblivious data structure} (WoODS) stored within an
adjacent \ourworam{} to the main, data-storing one.

\paragraph{Recursively stored position map.}
One possibility for storing the position map is to pack as many
positions as possible into a single block, and then store an adjacent,
smaller WoORAM containing these position map blocks only. Then
\emph{that} WoORAM's position map is stored in a smaller one, and so on,
until the size is a constant and can be stored in memory or refreshed on
each write. If at least
two positions can be packed into each block, the number of levels in
such a \emph{recursive position map} is $O(\log_B N)$.

If we consider each of the recursive WoORAMs using the same write
procedure as described in Algorithm~\ref{alg:deam-NM}, a problem
quickly emerges. A write requires multiple updates to the position map
due to the de-amortized procedure: one update to store the location
within the holding area of the newly written data, and some number of
updates to store the refreshed main areas. In a recursive setting, as
these position map updates must occur for every recursive level of the
position map, we can get exponential blow up. One write to the main
WoORAM requires $O((1+M/N)^R)$ writes at the smallest WoORAM, where
$R \in O(\log_B N)$ is the number of recursive levels.

In \travisworam{}, this issue is solved using additional state
information of ``metadata blocks'', each containing the actual index of
the block as well as the IV used to encrypt that block. These metadata
blocks are stored alongside the primary physical blocks for the WoORAM.
Crucially, by storing the actual \emph{index} associated with each block
in memory, it is no longer necessary to update the position map multiple
times for each write.
While something similar
would work for our system, we solve this problem more efficiently,
avoiding the need for separate storage of metadata blocks entirely.

The difference here is not asymptotic, but helps in practice by
essentially eliminating an extra metadata block read/write on every
step. It also allows us to take better advantage of the uniform block
setting, where even reading or writing a few bytes in a block requires
transferring $O(B)$ bytes of data. This technique is crucial to our
obtaining optimal $2B$ physical writes per logical write, as we show in
the next section.

\algrenewcommand{\algorithmiccomment}[1]{$//$\textit{#1}}
\begin{algorithm}[h]
\small
  \caption{\ourworam{} Operations with a Pointer Based Position Map:
    main area size $N$, holding area size $M$, data array $D$, and
    counter $i$}
  \label{alg:deam-final}
  \begin{algorithmic}

  \State \Comment{Read and return data for address $a$}
  \Function{read}{$a$}
  \State $(a_h,o,q) := \textsc{getpos}(a)$
  \State $B_m = \textsc{dec}(D[a])$
  \If{$B_m[o] = q$} ~\Return $B_m$
  \Else ~\Return $\textsc{dec}(D[a_h])$
  \EndIf
  \EndFunction
  \\

  \State \Comment{Perform the $i$-th write of data $d$ to address $a$}
  \Function{write}{$a,d$}
  \State $a_h := N + (i \bmod M) $ \Comment{Holding address}
  \State $D[a_h] := \textsc{enc}(d)$ \Comment{Write to holding area}
  \State $(o,q) := \text{\sc diff}(d,\textsc{dec}(D[a]))$ \Comment{Offset $o$ and bit diff $q$}
  \State $\textsc{setpos}(a,(a_h,o,q))$  \Comment{Update Position Map}
  \\
  \State \Comment{Refresh $N/M$ main area blocks per-write}
  \State $s :=  \lfloor i \cdot  N/M \rfloor  \bmod N$
  \State $e := \lfloor (i+1) \cdot N/M  \rfloor \bmod N$
  \For {$a_m \in [s,e)$}{}
  \State $D[a_m] := \textsc{enc}(\textsc{read}(a_m))$
  \State \Comment{No position map update needed}
  \EndFor
  \State $i := i + 1$ \Comment{increment counter}
  \EndFunction

  \\

\end{algorithmic}
\end{algorithm}

\paragraph{Position map pointers and one-bit diff technique.}
To improve the position map and remove exponential blow-up in updating
the position map, we recognize that we have a distinct advantage in
\ourworam{} construction as compared to prior schemes in that for main
area blocks, data is always located at its address. The holding area
is the only portion of the WoORAM that needs a position map. {\em The
  position map does not need to be updated for a refresh if we could
  determine the freshest block during a read.}

To see this, consider a position map that simply stores a holding-area
address. When we perform a read of address $a$, we need to look in
two locations, both in the holding area at where the position map says
$a$ is and in the main area at $a$. Given these two blocks, which is
freshest data associated with $a$?

We can perform a freshness check between two blocks using the {\em one-bit
diff} technique.
Specifically, the position map gives a mapping of logical address $a \in [0,N)$
to a tuple $(a_h,o,q)$, where $a_h \in [0, M)$ is an address to the holding
area, $o \in [0, B)$ is a bit offset within a block, and $q \in \zo$ is the bit
value of the freshest block at the offset $o$.
We define the tuple $(a_h,o,q)$ as a {\em position map pointer}.

Whenever a write occurs for logical address $a$ to holding area $a_h$,
the offset $o$ is chosen so as to \emph{invalidate} the old data at
address $a$ in the main area. Specifically, we ensure that the $o$th
bit of the new, fresh data $\dec(D[a_h])$ is different from the $o$th
bit of the old, stale data $\dec(D[a])$. (If there is no difference
between these, then the old data is not really stale and the offset
$o$ can take any valid index, say 0.)

Given the pointer $(a_h, o, q)$, a freshness check between two blocks $\dec(D[a])$
and $\dec(D[a_h])$ is performed as follows:
\begin{myquote}
Check if the $o$th bit of $\dec(D[a])$ is $q$. If so, $\dec(D[a])$ is fresh; otherwise
$\dec(D[a_h])$ is fresh.
\end{myquote}
The key observation is that {\em when a block is
refreshed to the main area, there is no need to update the position
map with a new pointer}, since the read operation always
starts by checking if the block in the main area is fresh. If the main
area block is fresh, then there is no need to even look up the holding
area position (which may have been rewritten with some newer block for a
different logical address).
See Algorithm~\ref{alg:deam-final} for details of how this is accomplished.

\paragraph{Trie WoODS for Position Map}
\label{sec:trie}
A more efficient solution for storing the position map, as compared to the
recursively stored position map, is to use an oblivious data structure (ODS)~in
the form of a Trie.
Recall that Trie edges are labeled, and looking up a node with a keyword $w_1
w_2 \cdots w_\ell$ is performed by starting with a root node and following the
edge labeled with $w_1$, and then with $w_2$, and all the way through the edge
labeled with $w_\ell$ one by one, finally reaching the target node.

As with previous tree-based ODS schemes~\cite{CCS:WNLCSS14,SP:RocAviCho16}, our
ODS scheme avoids recursive position map lookups by employing a {\em
pointer-based technique}. That is, the pointer to a child node in a
Trie node {\em directly points to a physical location} instead of a
logical location, and therefore it is no longer necessary to translate
a logical address to a physical address within the Trie itself.

Applying an ODS in a write-only setting (a WoODS or write-only
oblivious data structure) is similar to an idea proposed by
Chakraborti~et. al~\cite{PETS:ChaCheSio17}.
A major difference in our construction
is that we do not store the data structure within the primary
WoORAM.
We also allow changing the branching factor of the Trie independently of the
block size, so we can tune the secondary WoORAM and flexibly control the
number of physical block writes for every logical write, including
position map information stored within the Trie.

\def\wwrite{\textsc{write-to-holding}}
\begin{figure}[t]
\centering
\noindent\includegraphics[width=0.8\linewidth]{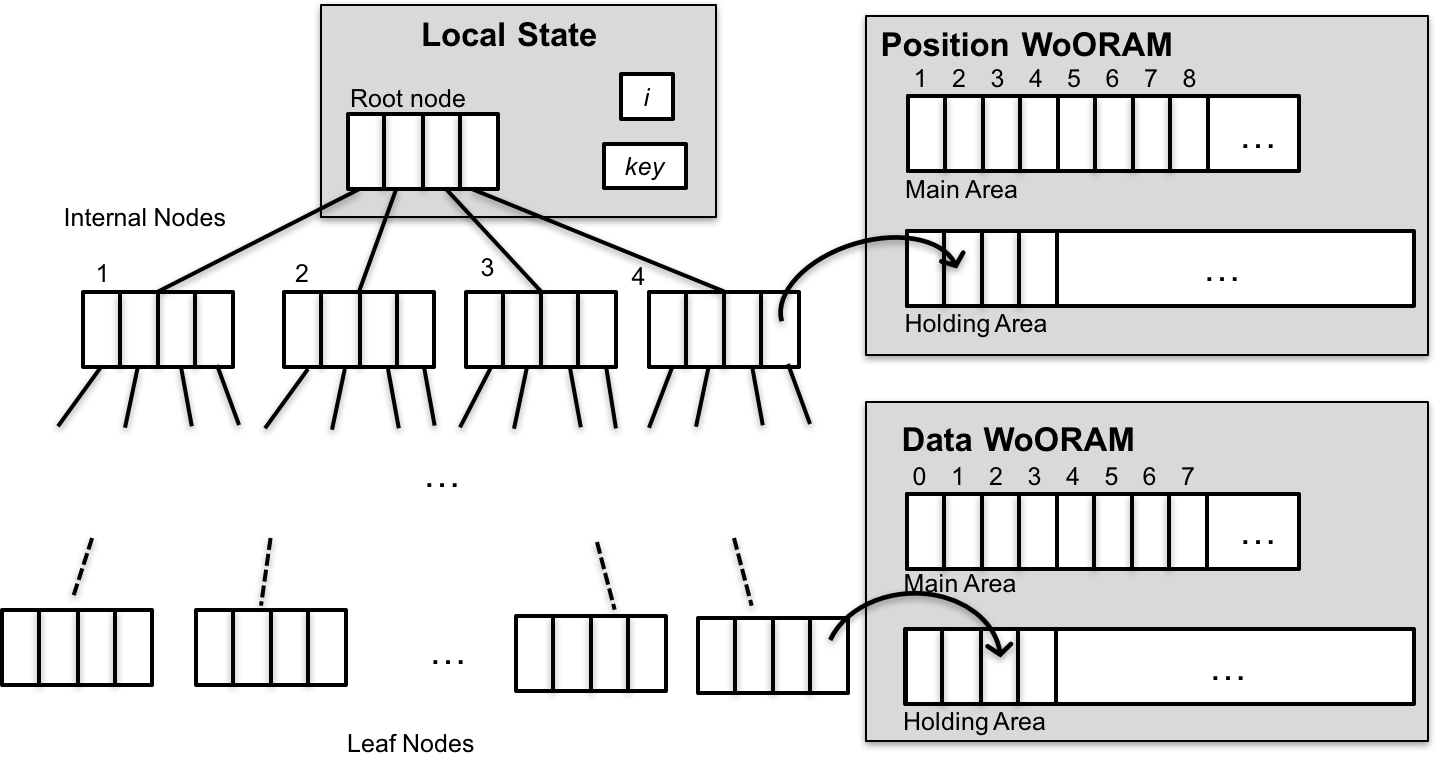}
\caption{\ourworam{} Diagram with Trie Position Map}
\label{fig:trie}
\end{figure}

\ifpreprint
\begin{algorithm}[p]
\else
\begin{algorithm}[h]
\fi
\small
  \caption{Trie WoODS with $N_p$ nodes and branching factor $b$.
   \textsc{read} and \textsc{write} calls are the routines in
   Algorithm~\ref{alg:deam-final} (with modified subroutines as specified)
   applied to position WoORAM instantiated with $N_p$ main blocks and $M_p$
   holding blocks.}
  \label{alg:deam-trie}
  \begin{algorithmic}

  \State \Comment{$a \in [0, N_p+N)$ is a position WoORAM or data WoORAM
  address}
  \Function{path-indices}{$a$}
  \If{$a = 0$} \Return $[]$ \Comment{Base case: empty path to root node}
  \Else\ \Return $[\textsc{path-indices}(\lfloor (a-1)/b \rfloor),
    (a-1)\bmod b]$
  \EndIf
  \EndFunction
  \\

  \State \Comment{Retrieve Trie nodes along a path}
  \Function{path-nodes}{$a_0, a_1, \ldots, a_{\ell-1}$}
  \State $B_0 := \textrm{root node}$ \Comment{Root node is kept in
  local state}
  \State $a := 0$
  \For{$i=0, \ldots, \ell-1$}
  \State $ptr := B_i[a_i]$ \Comment{The pointer to the $a_i$th child, i.e., $ptr = (a_p, o, q)$}
  \State $a := (a_i+1) + b \cdot a$
  \State $B_{i+1} := \textsc{read}(a)$ in Alg.~\ref{alg:deam-final} with its subroutine changed as:
  \State \phantom{...} $\rhd$ $\textsc{getpos}(a)$ returns $ptr$
  \EndFor
  \State \Return $(B_0, \ldots, B_{\ell})$
  \EndFunction
  \\

  \State \Comment{$a$ is an address; $src$ is either \texttt{DATA} or
  \texttt{TRIE}}
  \Function{getpos-trie}{$a, src$}
  \If{$src = {\tt DATA}$} $a_0,a_1,\ldots,a_\ell := \textsc{path-indices}(N_p+a)$
  \Else\  $a_0,a_1,\ldots,a_\ell := \textsc{path-indices}(a)$
  \EndIf
  \State $(B_0, \ldots, B_{\ell}) := \textsc{path-nodes}(a_0, a_1, \cdots a_{\ell-1})$
  \State \Return $B_{\ell}[a_\ell]$
  \EndFunction
  \\

  \State \Comment{$a$ is a data WoORAM index; $ptr$ is a pointer}
  \Function{setpos}{$a,ptr$}
  \State $a_0,a_1,\ldots,a_\ell := \textsc{path-indices}(N_p+a)$
  \State $(B_0, \ldots, B_\ell) := \textsc{path-nodes}(a_0, a_1 \cdots a_{\ell-1})$
  \State $B_\ell[a_\ell] := ptr$ \Comment{Change the leaf first}
  \For{$j = \ell, \ldots, 1$}{} \Comment{from leaf to root}
    \State Call \textsc{write}($a_{j-1}, B_j$) in Alg.~\ref{alg:deam-final} with its subroutines changed as:
    \State \phantom{...} $\rhd$ \textsc{setpos}($a_{j-1}, ptr'$) assigns $ptr'$ to $B_{j-1}[a_{j-1}]$
    \State \phantom{...} $\rhd$ \textsc{getpos}($a$) returns
    \textsc{getpos-trie}($a,\texttt{TRIE}$)
  \EndFor
  \IIf{$\ell \ne \ceil{\log_b(N_p)}$}
    write a dummy Trie node
  \EndIIf
  \State \textrm{Root node} $:= B_0$
  \EndFunction

\end{algorithmic}
\end{algorithm}

As the WoODS Trie is stored in an adjacent \ourworam{} construction,
we differentiate between the two WoORAMs by referring to the {\em data
  WoORAM} as the WoORAM storing data blocks and the {\em position
  WoORAM} as the WoORAM storing the nodes of the Trie. The Trie itself
acts as the position map, and will map addresses in the data WoORAMs
main area to position map pointers referencing the data WoORAMs
holding area. The main idea is that given an address $a$, one can walk
the Trie to find a leaf node storing $a$'s
position map pointer. The position WoORAM will be strictly smaller than the data
WoORAM, but will be implemented using the same \ourworam{} framework (i.e.,
using the notions of main area, the holding area, and the counter).

Details of
the procedure for the position WoORAM is outlined in
Algorithm~\ref{alg:deam-trie}. Observe that the functions for the
position WoORAM call the \textsc{READ} and \textsc{WRITE} functions
from Algorithm~\ref{alg:deam-final}, but with modified versions of
the subroutines for accessing and updating the position map, as the Trie
is its own position map.

As noted, the Trie is stored in an adjacent WoORAM that has a main
area and a holding area. The key difference is that the Trie nodes are
addressed with the position WoORAM's main area using heap
indexing. For example, with a branching factor of $b=4$, the root node of the
Trie has address 0, its children are at address 1, 2, 3, and 4, and their
children are at addresses (5, 6, 7, 8), (9, 10, 11, 12), and so on. By using heap
indexing, the structure of the Trie reveals the position of its nodes, becoming
its own position map. In particular, this indexing avoids the need to
store edge labels explicitly; they can instead be stored implicitly
according to the heap indexing formulas.

It is still possible for a node of the Trie to have been
recently updated and thus the freshest node information to be resident
in the holding area of the position WoORAM.  As such, each internal
node of the Trie stores $b$ position map pointers to the position
WoORAM's holding area, one for each of its child nodes. A leaf node
in the Trie then stores $b$ position map pointers to the data WoORAM's
holding area. The root node of the Trie can be stored as part of the local state,
since it is constantly rewritten and read on every operation.
A visual of the Trie is provided in Figure~\ref{fig:trie}.

Reading from the Trie to retrieve a position map pointer for the data
WoORAM is a straightforward process. One only needs to traverse from
the root node to a leaf, following a path dictated by the address $a$
called via $\textsc{getpos-trie}(a,\texttt{DATA})$.
On each step down the tree, the
current Trie node stores the position map pointer of the child node;
the corresponding sequence of nodes are retrieved via the
\textsc{path-nodes} helper function.
The position map pointers for the data WoORAM can be found at the
correct index in the leaf node along the fetched path.

Updating a pointer in the Trie (by calling $\textsc{setpos}(a,ptr)$)
is a bit more involved.
An update of the position map for the data WoORAM
requires updating a leaf node in the Trie within the position
WoORAM. Writing that leaf node will change its pointer, which requires
updating the parent node, whose pointer will then also change, and so on
up to the root of the Trie.
That is, \emph{each write to the
main WoORAM requires rewriting an entire path of Trie nodes within
the position WoORAM}.

Recall that in \ourworam{}, each write operation not only writes one
block to the holding area, but also performs some \emph{refreshes} in
the main area. The challenge is, for each refresh, determining where in
the holding area fresher data might be. For the data WoORAM, this is
achieved simply by performing a lookup in the position map. But for the
position WoORAM, there is no position map! Instead, we use the Trie
itself to look up the pointer for fresher data in a position WoORAM
refresh operation, by calling $\textsc{getpos-trie}(a,\texttt{TRIE})$.
This is possible again because of the heap indexing;
from the index of the Trie node that is being refreshed, we can
determine all the indices of the nodes along the path to that one, and
then perform lookups for the nodes in that path to find the position
WoORAM holding area location of the node being refreshed.

\paragraph{Trie WoODS parameters and analysis}
We start by calculating $N_p$, which is the number of Trie nodes as well
as the size of the position WoORAM main area.
This needs to be large enough so that there is room for $N$ pointers in
the leaf nodes, where $N$ is the number of logical addresses in the data
WoORAM. With branching factor $b$, the number of Trie nodes is given by
\begin{equation}
  N_p = \left \lfloor \frac{N-2}{b-1} \right \rfloor
    \in O(\tfrac{N}{b-1}).
  \label{eqn:Np}
\end{equation}

To derive \eqref{eqn:Np} above, consider that each Trie node holds $b$
pointers, either to children in the Trie or to addresses in the data
WoORAM. We do not count the root node in $N_p$ because it changes on
each write in is stored in the $O(1)$ client local memory. The total
number of pointers or addresses stored is therefore $N_p+N$.
This leads to the inequality
$(N_p+1)b \ge N_p + N,$
which implies
$N_p = \ceil{\frac{N-b}{b-1}}.$
The form of \eqref{eqn:Np} is a simple rewriting of this floor into a
ceiling based on the fact that, for any two integers $x,y$,
$\ceil{x/y} = \floor{(x+y-1)/y}.$

The height
of the Trie is then equal to the height of a $b$-ary tree with $N_p+1$
nodes, which is $O(\log_b N)$. This is the number of
Trie nodes that need to be written to the holding area of the position
WoORAM on each update (including a potential dummy node).

Each write to the data WoORAM requires rewriting
$h\in O(\log_b N)$ nodes in the Trie (for a single path).
Each of those writes to the holding area of the position
WoORAM needs, on average, $N_p/M_p$ number of refreshes to the main
area, where $N_p$ is the size of the position WoORAM's main area and
$M_p$ is the size of the position WoORAM's holding area.

Looking up a position in the Trie requires reading $O(\log_b N)$
blocks in the position WoORAM, each of which results in up to two
physical reads due to having to check for fresher data.
A refresh operation in the position WoORAM also requires
a read of the Trie to determine if fresher data for that node exists
in the holding area. If the sizes of the main area $N_p$ and holding
area $M_p$ for the position WoORAM are not set appropriately, this could
lead to $O(\log^2 N)$ reads to perform an update. However, consider
that we can control the ratio $N_p/M_p$. If we set
$M_p \gg N_p \log_b N$ then we need to perform only $O(1)$
refreshes per position map update, thus requiring
$O(\log N_p)$ reads per update.

If $N$ is a power of $b$, then the number of leaf nodes in the Trie
$N/b$ is also a power of $b$, and the Trie is a complete $b$-ary tree of
height $\log_b (N/b)$. If $N$ is not a power of $b$, then the last level
of the Trie is incomplete, and leaf node heights may differ by one. In
order to preserve write obliviousness, in cases of rewriting a path with
smaller height, we add one additional dummy node write.

Finally, observe that the branching factor $b$ can play a role in
the performance. With $b=2$, the \emph{size} of a Trie path is
minimized, but the height and number of nodes $N_p$ are maximal.
Increasing $b$ will reduce the height of the Trie and the number of Trie
nodes, while increasing the total size of a single path.
As we will show next, adjusting the {\em packing} of position map
by setting the branching factor $b$ can be done carefully to achieve
write communication cost of exactly $2B$ in a fully sequential write
pattern.

\subsection{Fully Sequential Physical Write Pattern}
\label{sec:packing}
In this section, we describe how to achieve fully sequential writing
of physical storage and how to minimize the communication cost. This
requires interleaving the various storage elements of \ourworam{} such
that all the writing, regardless of which part of the construction is
being written, occurs sequentially.

To understand the challenge at hand, first consider a simple
implementation which aligns the data WoORAM (main and
holding areas) adjacent to the position WoORAM (main and holding area)
forming a single storage data array broken into size-$B$ blocks. A
write to the data WoORAM will result in a write to the holding area
plus $M/N$ average writes to the main area. The position map is also
updated, requiring $O(\log_b N)$ nodes in the Trie to be written to
the holding area and $O(1)$ refreshes of the position WoORAM's main
area, provided $M_p \in \Omega(N_p\log_b N)$.
While all these writes occur sequentially within their respective
data/position WoORAM main/holding areas, do \emph{not} occur
sequentially on the underlying storage device as each of the various
WoORAM areas are separated. Furthermore, the writes to the position map
are wasteful in that they may update only a few nodes, constituting just
a small fraction of the block, but in the uniform access model this in
fact requires updating the entire block.

We can improve on this storage layout and achieve a minimum
in write performance requiring exactly $2$ blocks to be written to
physical storage for each block write, where one block is the new data,
a half block worth of main area refresh, and a half block worth of
position map updates. Further, we can interleave the various WoORAM
portions such that those $2$ blocks are written sequentially on the physical device.

\paragraph{Data WoORAM Block Interleaving}

Every logical block write to the data WoORAM results in exactly one
block write to the holding area of data WoORAM.
Recall that there are two parameters for setting up \ourworam{}: $N$,
the size of the main area, and $M$, the size of the holding area. These
two values need not be the same, and in fact, to achieve sequential
writing, we will set $M = 2N$.
In this case, on average $\frac{M}{N}=\frac{1}{2}$ block is refreshed to
the main area for each logical write.

With adjacent main and holding areas, this could be achieved by
performing one full block refresh on \emph{every other} logical write.
To make the writing sequential, we will instead refresh \emph{half} of a
block on every logical write, resulting in the following storage layout:
\begin{align*}
h_0,\cfrac{m^0_0}{\square},h_1,\cfrac{m^1_0}{\square},h_2,\cfrac{m^0_1}{\square},h_3,\cfrac{m^1_1}{\square}, \ldots, h_{M-2},\cfrac{m^0_{N-1}}{\square}, h_{M-1},\cfrac{m^1_{N-1}}{\square}
\end{align*}
where $m^0_j$ is the first half of the block $m_j$, $m^1_j$ is the
back half of $m_j$, and $\square$ represents empty space.
(This empty space will be used to store nodes for the position WoORAM,
as we will show next.)

There is a slight complication to reading now, as a single main area
block is actually divided between two physical memory locations,
resulting in an additional (constant) overhead for reading operations.
The benefit is that the writing is fully sequential now: each logical
write requires writing sequentially the data being updated (to the
holding area), and the next \emph{half} block of data being refreshed
(to the main area), plus another half-block containing position map
information as we will detail next. Also observe that, under this
configuration with $M=2N$, the total physical memory requirement will be
$4N$ blocks.

\begin{figure}[t]
\centering
\noindent\includegraphics[width=0.8\linewidth]{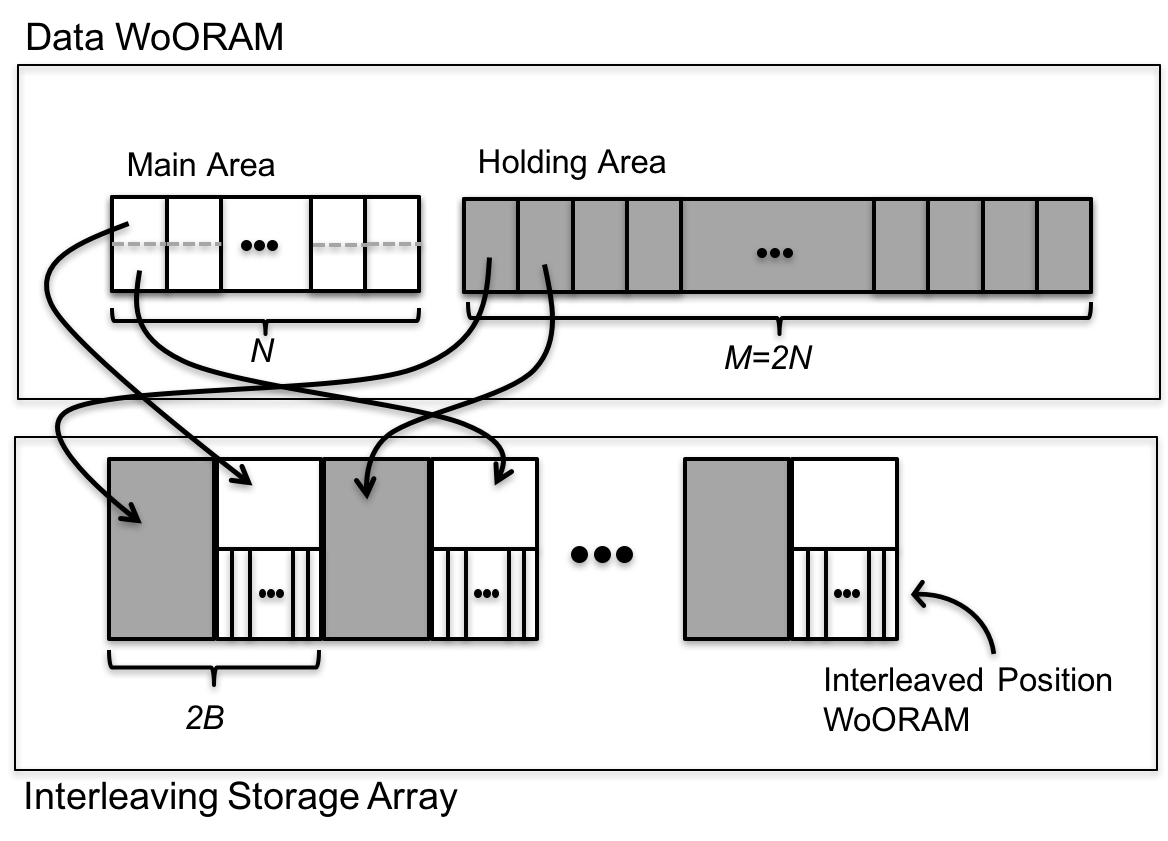}
\caption{Interleaving of WoORAMs into a Storage Array with $2B$ size sequential writes.}
\label{fig:interleave}
\end{figure}

\paragraph{Position WoORAM Block Interleaving}
As suggested above, the remaining half-block of data $\square$ in the
above construction will be used to store position map information.
A diagram storage achieving 2 sequential physical block writes per
logical write is shown in Figure~\ref{fig:interleave}.

Specifically, these $\frac{B}{2}$ bits will store the Trie nodes
written to position WoORAM holding and main areas during a single
logical write operation.
This is (potentially) possible because the Trie nodes in the position
WoORAM are much smaller than the blocks in the data WoORAM.
Fully sequential writing will be achieved if and only if all of the
Trie nodes written during a single step
can always fit into $B/2$ bits.

There are many settings of parameters $M$, $b$, and $M_p$ that may make
sequential writing possible, depending on logical and physical memory
requirements and the physical block size $B$. We will choose some
parameters here and demonstrate that they would work for any conceivable
value of $N$.

For this purpose, set the branching factor $b=2$, and then recall from
\eqref{eqn:Np} that the number of Trie nodes and
the Trie height will be
$N_p = N-2$ and $h = \ceil{\lg (N-2)}$, respectively.

Next, set the number of nodes in the position WoORAM holding area to be
\begin{equation}
M_p = N_p \cdot h.
\label{eqn:Mp}
\end{equation}
This ensures that only (at most) one Trie node needs to be refreshed to
the position WoORAM main area when writing an entire path of $h$ Trie
nodes during a single logical write operation. (The number of Trie nodes
written to the holding area for each operation is always $h$.)
Based on these formulae, we need to have enough space in the $B/2$ bits
of a half block to fit $h+1$ Trie nodes.

What remains is to estimate the size of each Trie node. Each node stores
$b=2$ \ourworam{} pointers, each of which contains $\ceil{\lg M}$ bits
for the holding area position, $\ceil{\lg B}$ bits for the block offset,
and 1 bit for the bit diff value. The condition that $h+1$ Trie nodes
fit into $B/2$ bits then becomes
\begin{equation}
\left(h+1\right) \cdot \left(\ceil{\lg M} + \ceil{\lg B} + 1\right)
\le \frac{B}{2}
\label{eqn:fitblock}
\end{equation}

Combining this inequality with all of the previous settings for $b$,
$M$, and $M_p$,
and assuming a block size of 4096 bytes (so $B=4096\cdot 8 = 2^{15}$)
as is the default in modern Linux kernels, we have
$$
\left(\ceil{\lg\left(N-2\right)}+1\right)
\cdot
\left(\ceil{\lg N} + 17\right)
\le 2^{14}.
$$
That inequality is satisfied for values of $N$ up to
$6.6\times 10^{35}$, which is \emph{much} more than any conceivable storage 
size. Further tuning of the $b$ and $M$ parameters could be dome to
achieve an even tighter packing and/or better read performance while
maintaining 2 physical block writes per logical write.

\subsection{Encryption Modes}
The deterministic and sequential access pattern fits nicely with encryption of
each block using \emph{counter mode}. In particular, We encrypt each
\ourworam{} block using AES encryption based on the number $i \|
0^{64}$ as a counter. Recall that the client maintains the global counter $i$
(64-bit long). Assuming the block size $B$ is reasonable (shorter than $2^{64}
\cdot 16$ bytes), there will be no collision of IVs, and the security of
encryption is guaranteed.  We stress that we do not need space for storing IVs
due to this optimization which cannot be applied to previous schemes. For
example, the randomized procedures in schemes like \travisworam{}, IVs must be
stored separately on the server, adding to the communication cost overhead.

However, the physical blocks that store the position map Trie are
encrypted with AES in CBC mode. When we pack
multiple Trie nodes together, such as during interleaving or packing as
described previously, we can encrypt a group of Trie nodes together in one shot
using a single IV. Since Trie nodes are much smaller than $B$, we can place
that IV for that group of nodes at the beginning of the block itself,
thus avoiding an extra memory access on read or write.

We note that after packing the Trie nodes into blocks, the number of blocks in
the main \ourworam{} is significantly larger than that in the
Position-\ourworam{}, so that most of the data is encrypted using counter
mode.

\section{Analysis of \ourworam{}}

We formally state the security (obliviousness), and
communication complexity of \ourworam{}. Fortunately, the simplicity of
the construction makes the proofs relatively straightforward in all
cases.

\paragraph{Security proof}
First we state the security in terms of the definitions in
Section~\ref{sec:woramdef}.

\begin{theorem}
  \ourworam{} provides write-only obliviousness.
\end{theorem}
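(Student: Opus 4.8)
The plan is to exploit the single structural feature that makes \ourworam{} easy to analyze: the sequence of \emph{physical addresses} that get written during any execution is a deterministic function of the \emph{number} of logical write operations performed so far, and is completely independent of the logical addresses $a_i$ and the data $d_i$ in the operation sequence. Once this is established, the only thing separating the physical write patterns of two equal-length operation sequences is the ciphertext payloads, which are handled by a standard IND-CPA hybrid argument. Throughout, recall that the definition only asks about the \emph{filtered} access pattern $\Filter(\Access^\Pi(\cdot))$, so all physical reads may simply be ignored — this is what keeps the argument short.

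First I would prove the address-pattern claim for the data WoORAM in isolation (Algorithm~\ref{alg:deam-final}). On the $i$-th write, the holding-area block written is $N+(i\bmod M)$, and the refreshed main-area blocks are exactly the indices in $[s,e)$ with $s=\lfloor iN/M\rfloor\bmod N$ and $e=\lfloor(i+1)N/M\rfloor\bmod N$; every one of these is a function of $i$ alone. Hence for any two logical sequences $\vec x,\vec y$ with the same number of writes, after filtering out reads the two physical write-address sequences are literally identical. The interleaving/packing layout of Section~\ref{sec:packing} only relabels physical addresses through a fixed bijection and refreshes a fixed half-block per step, so it preserves this property.

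Second I would extend the claim to the position WoORAM (Algorithm~\ref{alg:deam-trie}). A \textsc{setpos} call rewrites a root-to-leaf Trie path whose length may vary by one (depending on whether the target leaf lies on the possibly-incomplete last level), but the construction pads with one dummy Trie-node write so that exactly $\ceil{\log_b N_p}$ node-writes occur per logical write regardless of the target. Each such node-write is itself a data-WoORAM-style write inside the smaller position WoORAM, so by the previous paragraph its physical target addresses depend only on the position WoORAM's counter, which therefore advances by a fixed amount per logical write. The \textsc{getpos-trie} lookups used both for reads and for locating refresh targets are reads of the position WoORAM and are filtered out. Consequently the entire physical write-address sequence of \ourworam{} — data WoORAM and position WoORAM, interleaved — is a deterministic function of the number of logical writes, and is identical for $\vec x$ and $\vec y$.

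Finally I would bridge from ``identical address sequences'' to ``computationally indistinguishable access patterns'': the payload written to each physical block is a ciphertext of the underlying encryption scheme (AES counter mode for data blocks, AES-CBC for packed Trie blocks), and by the IV/counter-management discussion of Section~\ref{sec:packing} the counters never collide, so the scheme is IND-CPA. A routine hybrid over the (equal, fixed) number of physical writes replaces each ciphertext produced on input $\vec x$ by the corresponding ciphertext on input $\vec y$, each step costing at most the IND-CPA advantage, yielding $\Filter(\Access^\Pi(\vec x))\approx_c\Filter(\Access^\Pi(\vec y))$. The step I expect to require the most care is verifying the ``fixed number of physical writes per logical write'' invariant in \emph{every} branch — the $\lceil N/M\rceil$-versus-$\lfloor N/M\rfloor$ refresh count, the dummy Trie node, and the recursion into the position WoORAM — since write-obliviousness hinges entirely on that invariant holding unconditionally.
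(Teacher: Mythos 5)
Your proposal is correct and follows essentially the same route as the paper's own (much terser) proof: the sequence of physical write locations is a deterministic function of the number of logical writes alone, and the written contents are IND-CPA ciphertexts, so a standard hybrid gives $\Filter(\Access^\Pi(\vec x))\approx_c\Filter(\Access^\Pi(\vec y))$. Your added care in checking the fixed-writes-per-step invariant across the refresh schedule, the dummy Trie node, and the position WoORAM is a welcome elaboration of the same argument, not a different one.
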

\begin{proof}
  Let $\vec{x}$ and $\vec{y}$ denote two sequences of operations in
  \ourworam{} that contain the same number of write operations.

  The sequence of \emph{locations} of physical writes is deterministic and does not
  in any way depend on the actual locations being written, and the
  \emph{contents} of physical writes are encrypted using an IND-CPA
  symmetric cipher. Therefore ii
  it holds that
$$\Filter(\Access^\Pi(\vec{x})) ~~
    \approx_c ~~ \Filter(\Access^\Pi(\vec{y})), $$
  because the locations in these two access patterns are identical, and
  the contents in the access patterns are indistinguishable from random.
\end{proof}

\paragraph{Communication complexity}
For the complexity analysis, assume that:
\begin{itemize}
  \item the size ratio $M/N$ is a constant,
  \item the branching factor $b$ is a constant,
  \item the block size $B$ is large enough to
    contain a single path of trie nodes, and
  \item the position map holding area is at least $O(\log N)$ times
    larger than the position map's main area.
\end{itemize}

Asymptotically this means
that $B \in \Omega(\log^2 N)$.
From a practical standpoint, even in the extreme case of storing a
yottabyte of data ($2^{80}$ bytes), with holding area size $M=N$,
branching factor $b=2$, and 4KB
blocks (i.e., $B=4096$), an entire path of trie nodes is still well
below the block size at 1496 bytes.

\begin{theorem}
Under the assumptions above,
the number of physical block writes per logical block write in
\ourworam{} is $O(1)$. Furthermore, the number of physical block
reads per logical read or write operation is $O(\log N)$.
\end{theorem}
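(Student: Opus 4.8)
The plan is to bound separately the physical accesses that a single logical operation triggers in the data WoORAM and in the position WoORAM, and then to combine the two using the four stated assumptions. Throughout, I set $h = O(\log_b N) = O(\log N)$ for the Trie height (using that $b$ is constant), and recall that $N_p = O(N)$ and that the position WoORAM's holding area satisfies $M_p = \Omega(N_p\, h)$.

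For the write bound, I would first read off from Algorithm~\ref{alg:deam-final}, applied to the data WoORAM, that a logical write puts one block into the holding area and refreshes $\lfloor (i+1)N/M\rfloor - \lfloor iN/M\rfloor = O(N/M) = O(1)$ main-area blocks, since $M/N$ is constant. The only remaining writes come from the single call to $\textsc{setpos}$, which by Algorithm~\ref{alg:deam-trie} rewrites a full root-to-leaf path of $h$ Trie nodes into the position WoORAM's holding area and, viewing each of those as a write in Algorithm~\ref{alg:deam-final} on the position WoORAM, triggers $h\cdot(N_p/M_p) = O(1)$ refreshed Trie nodes on average. So a logical write touches $h + O(1)$ \emph{Trie nodes} in the position WoORAM; by the assumption that a full path of Trie nodes fits in a block, all of these pack into $O(1)$ physical blocks (the packing made explicit in Section~\ref{sec:packing}). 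Summing over the two WoORAMs gives $O(1)$ physical block writes per logical write.

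For the read bound, I would start with a logical $\textsc{read}(a)$: its cost is dominated by the lookup $\textsc{getpos-trie}(a,\texttt{DATA})$, which calls $\textsc{path-nodes}$ and hence issues one $\textsc{read}$ per node on a root-to-leaf path of length $h$, each costing at most two physical reads (the main-area location, plus the holding-area location when the freshness check fails); the final data fetch costs $O(1)$ more. This is $O(\log N)$. For a logical write I would then add the reads buried in Algorithms~\ref{alg:deam-final} and~\ref{alg:deam-trie}: $\textsc{diff}$ reads $D[a]$, which is $O(1)$; the $O(1)$ data-area refreshes each invoke $\textsc{read}$ at $O(\log N)$ apiece, hence $O(\log N)$; $\textsc{setpos}$ itself reads one Trie path via $\textsc{path-nodes}$, $O(\log N)$; and each of the $O(1)$ position-WoORAM refreshes invokes $\textsc{getpos-trie}(\cdot,\texttt{TRIE})$, again reading one Trie path, so $O(\log N)$ in total. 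Every term is $O(\log N)$, so a logical write also costs $O(\log N)$ physical reads.

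The main obstacle I anticipate is ensuring the nested structure does not cascade and inflate either bound by a $\log N$ factor. Two points need care. First, a write to the position WoORAM might seem to require updating \emph{its} position map, but the heap-indexed Trie is its own position map --- the $\textsc{setpos}$ subroutine used on the position WoORAM merely updates the parent node already held in local state, and the $\textsc{getpos}$ subroutine is a read-only Trie traversal --- so position-WoORAM writes never recurse into yet another WoORAM level. Second, each position-WoORAM refresh pays $O(\log N)$ reads for its internal Trie traversal, so $\Theta(\log N)$ such refreshes per logical write would yield $O(\log^2 N)$ reads; the assumption $M_p = \Omega(N_p\log N)$ is exactly what forces the refresh count down to $O(1)$ and keeps the read bound at $O(\log N)$. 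Once these two points are pinned down, the rest is the routine bookkeeping above.
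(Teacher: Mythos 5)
Your proposal is correct and follows essentially the same route as the paper's proof: bound the data-WoORAM cost via the constant $M/N$ ratio, bound the position-WoORAM cost via the Trie height $O(\log_b N)$ and the $M_p=\Omega(N_p\log N)$ assumption forcing $O(1)$ refreshes, and use the fits-in-a-block assumption to turn the $h$ Trie-node writes into $O(1)$ physical block writes. If anything, you are slightly more explicit than the paper about the packing step that converts node writes into block writes, which is a welcome clarification rather than a divergence.
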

\begin{proof}
Let $h\in O(\log_b N)$ for the height of the trie that stores the
position map.  A single read to \ourworam{} requires at most two block
reads and one position map lookup, which requires fetching all $h$
nodes in the Trie path to that position. Fetching a Trie node in the
position WoORAM requires accessing the parent node as well, requiring
at most $2h$ nodes need to be fetched.  In the worst case every node
might be packed in a different block, so this is $O(1 + h)$ physical
block reads per logical read, which is $O(\log_b N)$.

A single write to \ourworam{} requires at most $1 + \lceil M/N \rceil$
block writes to holding and main areas and one update to the position
map Trie. Each main area refresh requires an additional block read and
position map lookup.  Because $M/N$ is a constant, this is $O(1)$
block writes, $O(\log_b N)$ reads, and one trie update.

Updating a single node in the Trie involves first fetching the path to
that node in $O(\log_b N)$ physical reads, then writing each node on
that path, updating the pointers in the parent nodes from leaf back up
to root. This requires $h$ writes to the position map WoORAM, which from
the assumptions will require $h$ writes to the holding area plus $O(1)$
refreshes in position map WoORAM's main area. These $O(1)$ refreshes
each require looking up $O(\log_b N)$ nodes in the position map
WoORAM, for an additional reading cost of up to $O(\log N)$ physical
blocks.

All together we get $O(\log_b N)$ physical reads per logical write,
and $O(1)$ physical writes per logical write.
\end{proof}

\section{Implementation}

We have implemented our \ourworam{} system, using the Trie-based
position map, in an open source C++ library available at
\url{https://github.com/detworam/detworam}. As we will show in this
section, comparison benchmarks validate our theoretical results on the
efficiency of \ourworam{}, showing it to be many times faster than the
previous scheme of \travisworam{}, and only a few times slower than a
non-oblivious baseline.

\paragraph{Our library}
The library relies on BUSE (Block device in USErspace,
\url{https://github.com/acozzette/BUSE}) to allow mounting a normal
filesystem as with any other device. We also use the
mbed TLS library (\url{https://tls.mbed.org/}) for encryption utilities.
We also made extensive use of C++ templates in our implementation, which
allows for considerable flexibility in choosing the parameters for the
\ourworam{} and automatically tuning the performance at compile-time.
For example, based on the size and number of backend storage blocks, the
exact byte sizes needed to store pointers, relative proportion of data
WoORAM to position map WoORAM, trie height, and relative main/holding
area sizes will all be seamlessly chosen at compile time.

\begin{table}[tb]
\ifpreprint
\centering\noindent
\else
\resizebox{\columnwidth}{!}{%
\fi
\begin{tabular}{r|cc|cc|}
  & \multicolumn{2}{c|}{Sequential write}
  & \multicolumn{2}{c|}{Sequential read}
\\
  & MB/sec & overhead
  & MB/sec & overhead
\\
\hline
\textbf{dm-crypt baseline} &&&& \\
SSD & 505.4 & ---
  & 615.8 & ---
  \\
HDD & 111.6 & ---
  & 126.1 & ---
  \\
\hline
\textbf{\travisworam{}} &&&& \\
SSD & 2.6 & 192x & 40.5 & 15.2x \\
\hline
\textbf{\ourworam{}} &&&& \\
SSD, $M=3N$, $b=64$
  & \textbf{49.7} & 10.2x & \textbf{260.0} & \textbf{2.37x}
  \\
SSD, $M=N$, $b=64$
  & 34.0 & 14.8x & 244.1 & 2.52x
  \\
HDD, $M=3N$, $b=64$
  & 29.0 & \textbf{3.84x} & 26.1 & 4.83x
  \\
HDD, $M=N$, $b=64$
  & 25.0 & 4.46x & 24.1 & 5.24x
  \\
\hline
\end{tabular}
\ifpreprint
\else
}
\fi

\ifpreprint
\small\centering
\else
\raggedright
\fi
Logical disk size 40GB and block size 4KB in all cases.\\
Overhead is relative to the dm-crypt baseline for that drive
type.\\
Highlighted values indicate the best WoORAM per column.
\caption{bonnie++ benchmarking of sequential accesses%
\label{tab:bonnie}}
\end{table}

The implementation is exactly as described in the previous section, with
a default Trie branching factor of $b=64$ unless otherwise noted.
The only exception is that we did \emph{not} implement the full
interleaving, but rather the packing solution within the position map
WoORAM to pack trie nodes into single blocks. Two blocks at a time
(from the
position map holding and position map main areas) are held in memory
while they are being filled sequentially, and then are written back to
disk once filled. In total, the result is that rather than having a
fully sequential access pattern as we would with full interleaving, we
see 4 sequential write patterns in sub-regions of memory.

\begin{table}[tb]
\ifpreprint
\centering\noindent
\else
\resizebox{\columnwidth}{!}{%
\fi
\begin{tabular}{r|cc|cc|}
  & \multicolumn{2}{c|}{Solid state SSD}
  & \multicolumn{2}{c|}{Spinning platters HDD}
\\
  & MB/sec & overhead
  & MB/sec & overhead
\\
\hline
dm-crypt baseline
  & 154 & ---
  & 16.4 & ---
  \\
\travisworam{}
  & 8.49 & 18x
  & 0.051 & 325x
  \\
\ourworam{}
  & 34.4 & 4.5x
  & 10.2 & 1.6x
  \\
\hline
\end{tabular}
\ifpreprint
\else
}
\fi

\ifpreprint
\small\centering
\else
\raggedright
\fi
Logical disk size 40GB and block size 4KB in all cases.\\
Overhead is relative to the dm-crypt baseline for that drive
type.\\
\ourworam{} used $M/N=3$ and $b=64$ for all cases.\\
Highlighted values indicate the best WoORAM per column.
\caption{fio benchmarking of random reads and writes%
\label{tab:fio}}
\end{table}

\paragraph{Comparisons}
We carefully re-implemented the \travisworam{} (only the
WoORAM part, not the hidden volume part), using the same BUSE/mbedTLS
library setup. As in their original paper and implementation, our
\travisworam{} implementation uses $k=3$ random physical writes per
logical write, and makes use of a recursive position map.
The original implementation was as a kernel module for a
device mapper, but unfortunately due to Linux kernel changes this module
is incompatible with recent Linux kernels. In fact, this incompatibility
was part of our motivation to use only standards-compliant userspace C++
code for our \ourworam{} implementation.

For a baseline comparison, we wanted to use the best existing solution
with the same general setup as ours. Our baseline uses the linux kernel
module \texttt{dm-crypt}, which provides an encrypted block device with
no obliviousness, connected to simple ``passthrough'' device that comes
with the BUSE distribution.
There is no obliviousness in this option; it simply encrypts/decrypts and stores the
resulting ciphertext in the same location on disk. This provides a fair
baseline to our work, and should help to eliminate any bottlenecks or
artifacts of the BUSE layer in order to have a clear comparison with our
new \ourworam{} protocol.

\paragraph{Measurement using bonnie++}
Table~\ref{tab:bonnie} shows the results of running the popular
\texttt{bonnie++} disk benchmarking tool our the plain encryption
as well as different WoORAM settings. All tests were performed with a
40GB logical filesystem within a 200GB partition, using the
\texttt{btrfs} filesystem.

We tested using 200GB partitions on a 1TB HDD (HGST Travelstar 7200RPM)
and on a 256GB SSD (Samsung 850 Pro). We note that both drives are
standard commodity disks available for around \$100 USD.
As expected, the SSD drive is considerably faster for both reading and
writing.

Recall that one novel feature of \ourworam{} is that it can flexibly
adapt to different storage ratios between logical and physical storage.
We tested both with $M=N$, similar to the \travisworam{}, and with more
physical space of $M=3N$, and observed a slight (but statistically meaningful)
performance improvement from having more physical disk space (and
therefore larger holding area in the \ourworam{}). We also tested with
different branching factors ranging from $b=2$ to $b=512$, but did not
notice any significant timing differences overall, indicating that the
position map plays a smaller role in the overall performance.

Overall we can see that the \ourworam{} suffers only a 3x-10x slowdown
compared to the baseline, whereas the \travisworam{} is almost 200x
slower in the case of writing and 15x slower for reading compared to the
same baseline. The results for \travisworam{} are consistent with the
results reported in their original paper \cite{CCS:BMNO14}.

In fact, our \ourworam{} running on an SSD is in most
cases \emph{faster} than the baseline running on a spinning disk HDD,
providing good evidence that our system is fast enough for practical
use. We believe this is largely explained by the \emph{sequential write
pattern} of \ourworam{}, which also makes read operations
\emph{partially} sequential. For large sequential workloads, the data
locality appears to have a very significant effect on performance.

\paragraph{Measurement using fio}
As has been noted in previous WoORAM works \cite{CCS:BMNO14}, performing
sequential \emph{logical} operations can put WoORAMs in an especially bad
light, as the baseline non-oblivious storage will translate the
sequential read/write operations to physically sequential addresses,
thereby gaining significantly over WoORAMs that need to obscure the
logical address of each operation.

Interestingly, our \ourworam{} is a somewhat ``in-between'' case here,
as the write pattern is completely sequential, and the read pattern is
\emph{partially} sequential: the main area of storage corresponds
exactly to physical addresses, but the holding area and position map do
not. We used a second disk performance measurement tool \texttt{fio}
(\url{https://github.com/axboe/fio}) in order to perform \emph{random}
reads and writes, as opposed to the sequential read/write pattern of the
\texttt{bonnie++} benchmarks. The results are shown in
Table~\ref{tab:fio}, which shows the throughput for random reads and
writes of 4KB-4MB sized blocks in direct access to the device without
any filesystem mounted.

As expected, the performance degradation for HDD compared to SSD in all
cases was significant for the random reads and writes. As with the
bonnie++ benchmarks, but more dramatically here, our \ourworam{}
running on an SSD outperformed the baseline running on the HDD. Even
more surprisingly, on the HDD our \ourworam{} was only 1.6x slower than
the baseline.  This can be explained in part by the fact that our scheme
actually \emph{turns random writes into sequential writes}, so although
it performs more writes than the baseline, they will be more compact in
this experiment.

\section{Insecurity of DataLair}
A recent paper \cite{PETS:ChaCheSio17} has also proposed
to improve the performance of \travisworam{}. While this paper contains
some new and promising ideas, and in particular
proposed the use of a B-tree ODS similar to our Trie ODS for the
position map, unfortunately it violates the notion of write-only
obliviousness.

Intuitively, the DataLair scheme identifies that a bottleneck in
\travisworam{} is in identifying free blocks from the random blocks
chosen, and propose to modify the random block choosing scheme in order
to find free blocks more efficiently with fewer dummy writes.
Unfortunately, this improvement leaks a small amount of
information about which blocks are free or not, and thereby allows an
adversary to distinguish between whether recent writes have been to the
same address, or to different addresses. We formalize this notion and
prove the insecurity of these schemes below.

We note that, since the submission of this work, the authors of
\cite{PETS:ChaCheSio17} have acknowledged the vulnerability here and
proposed a fix as a preprint \cite{datalair-arxiv}.

\label{sec:attack}

\paragraph{Overview of scheme}
\label{apx:attack}
Let $N$ be the number of logical blocks. DataLair sets $2N$ to the number of
physical blocks so that the number of free physical blocks is always $N$.  
In DataLair~\cite[Section IV]{PETS:ChaCheSio17}, every ORAMWrite considers two disjoint sets of $k$ items: 
\BI
\item Free set $S_0$: A set of $k$ blocks chosen randomly among the $N$ free physical blocks.
\item Random set $S_1$: A set of $k$ blocks chosen randomly among the entire $2N$ physical blocks. 
\EI

To make sure that $S_0$ and $S_1$ are disjoint, some elements may be removed
and addional steps of sampling may be done.
Based on the two sets, the ORAM writes a data block as follows:

\begin{framed}
\small
ORAMWrite($d$): // $d$ is a data block
\BEN
\item Insert $d$ in $\stash$
\item Create two sets $S_0$ and $S_1$ as described above.
\item Choose $k$ blocks $U = \{u_1, \ldots, u_k\}$ as follows: \\
      For $i = 1$ to $k$: \\
      \phantom{...} $b_i \from \zo$, and fetch (and remove) $u_i$ from $S_{b_i}$. \\
      \phantom{...} If $b_i=0$ and $\stash$ is not empty: \\
      \phantom{......} Take out a data item from $\stash$ and write it in $u_i$. \\ 
      \phantom{...} Otherwise, reencrypt $u_i$.
\EEN
\end{framed}

We assume $N > 2k$ and $k \ge 3$. The actual scheme chooses a large $N$ and $k=3$. 
\def\seq{{\sf seq}}

\paragraph{Insecurity of the scheme}
We note that the access pattern of a single ORAMWrite is hidden. However, that
alone is not sufficient to show write obliviousness. In particular, security
breaks down when one considers {\em multiple ORAMWrite operations}.

Observe that the above algorithm is more likely to choose a free
block than a non-free block; with probability 1/2, a chosen block will be from
$S_0$ and thereby always free, and with probability 1/2, a chosen block will be
from $S_1$ and thereby sometimes free. This tendency towards choosing free
blocks leaks information. To clarify our point, consider the following two
sequences of logical writes:
$$\seq^0 = (init, w_0, w_0, w_2), ~~~~~~ 
\seq^1 = (init, w_0, w_1, w_2)$$
Here, $w_i$ denotes writing data to a logical address $i$, and $init$ is a
sequence of operations that makes the ORAM have exactly $N$ free
blocks.\footnote{
Their ORAM seems to be initialized with exactly $N$ free blocks, in which
case $init$ contains no operation. If that's not the case, we can set the
$init$ sequence as follows: $$init = (w_0, \ldots, w_{N-1}, \underbrace{w_0,
\ldots, w_0}_{\mbox{$\secparam$ times}}), $$
where $\secparam$ is the security parameter. 
Note that after the $init$ sequence, the ORAM will have exactly $N$ non-free
physical blocks and $N$ free physical blocks with probability least $1 -
\negl(\secparam)$. So, we can safely ignore this negligible probability,
and proceed our argument assuming that the ORAM has exactly $N$ free blocks
after the $init$ sequence.} 

Let $U_i = (u_{i,1}, \ldots, u_{i,k})$ be the set of chosen blocks from the
$i$th ORAMWrite after the $init$ sequence. Let $d_\ell$ be the data in {\em logical block} $\ell$. 

Then, in $\seq^0$, physical block $\gamma \in U_1$ containing $d_0$ will be
probably freed up thanks to the second $w_0$, and the last $w_2$ may be
able to choose $\gamma$ as a free block. However, in $\seq^1$, the block
$\gamma$ cannot be freed up by $w_1$, since $\gamma$ contains $d_0$! So,
the last $w_2$ can choose $\gamma$ only as a non-free block. Due to the
different probablity weights in choosing free blocks vs.~non-free blocks, $U_1$
and $U_3$ are more likely to overlap in $\seq^0$ than in $\seq^1$, and security
breaks down.

To clarify our point, we give an attack. Given an access pattern $(U_1, U_2,
U_3)$, the adversary tries to tell if it is from $\seq^0$ or
$\seq^1$.
Consider the following events: 
\BI
\item $X$: $u_{1,1} \not \in U_2$, $~~~~~$ 
$Y$: $u_{1,1} \in U_3$, $~~~~~$  $E$: $X \wedge Y$
\EI

\begin{framed}
\noindent The adversary works as follows: \\
\phantom{...} Output 0 if $E$ takes place; otherwise output a random bit.
\end{framed}

Let $p^b = \Pr[E]$ from $\seq^b$. We show that $p^0 - p^1$ is
non-negligible, which proves that the adversary is a good distinguisher. 

Let $F_i(u)$ denote a predicate indicating whether a physical block $u$ was free
when the $i$th ORAMWrite starts.
Note that whether $u_{1,1}$ belongs $U_3$ ultimately depends on $F_3(u_{1,1})$. In particular,
for any $u_{1,1}$, we have 
\begin{eqnarray*}
q_y & = & Pr[Y ~|~ F_3^b(u_{1,1})] = \frac 1 2 \cdot \frac k N + \frac 1 2 \cdot \frac k {2N - k}  \\
q_n & = & Pr[Y ~|~ \neg F_3^b(u_{1,1})] = \frac 1 2 \cdot \frac k {2N - k} 
\end{eqnarray*}

\def\FF{{\sf FreeSet}}
The following table shows how $F_3(u_{1,1})$ depends on the previous events. 
In the table, $D_2(u) \in \{f, d_0^*, \ldots, d_{N-1}^*, d_0 \}$
denotes a random variable indicating which logical block a physical block $u$
contains when the second ORAMWrite starts. If the value is $f$, it means the
block is free, and $d_\ell^*$ is the initial data for the logical block $\ell$
that the ORAM initialization procedure used. The value $d_0$ denotes the data
block used in the first $w_0$ operation in $\seq^0$ and $\seq^1$. 
Let $S_i(\ell)$ denote a predicate indicating whether a logical block $\ell$ is
in the stash when the $i$th ORAMWrite starts.
In addition, $\FF_i$ denotes a predicate indicating whether the $i$th ORAMWrite
found a physical block in the free set $S_0$ (thereby successfully writing the
input logical block in the free physical block). 

\medskip

\noindent
\begin{tabular}{c | c c c || c c} 
\hline
case & $D_2(u_{1,1})$ & $S_2(d_0)$ 
& \parbox{5em}{$\FF_2$ \\(cond.~on $X$)} 
& \parbox{3.5em}{$F_3(u_{1,1})$ \\ ($\seq^0$)} 
& \parbox{3.5em}{$F_3(u_{1,1})$ \\ ($\seq^1$)} \\
\hline
$c1$ & $f$  & x & x   & 1 & 1 \\ 
$c2$ & $d_0$   & 0    & 0 & 0 & 0 \\
\fbox{$c3$} & $d_0$   & 0    & 1 & 1 & 0 \\
$c4$ & $d_{\ge 0}^*$   & x    & 0 & 0 & 0 \\
$c5$ & $d_0^*$   & 1    & 1 & 1 & 1 \\
\fbox{$c6$} & $d_1^*$   & x    & 1 & 0 & 0 or 1 \\
$c7$ & $d^*_{\ge 2}$ & x & 1 & 0 & 0 \\ 
\hline
\end{tabular}

\medskip
For example, in case $c3$, 
\BI
\item $D_2(u_{1,1}) = d_0$: When the second ORAMWrite begins, the physical block $u_{1,1}$ contains the logical block $d_0$. 
\item $S_2(d_0) = 0$: When the second ORAMWrite begins, the stash is empty.
\item $(\FF_2 | X) = 1$: The second ORAMWrite found at least one block in the free set $S_0$.
\item For $\seq^0$, the second write is $w_0$. From $\FF_2 = 1$, a new
$d_0$ from $w_0$ will be written in a free block, and $u_{1,1}$ containing
the old $d_0$ is freed.  

\item For $\seq^1$, the second write is $w_1$.
From $\FF_2 = 1$, a new $d_1$ from $w_1$ will be written in a free block,
but $u_{1,1}$ containing $d_0$ is not affected.
\EI

\medskip
Note that the first ORAMWrite in both $\seq^0$ and $\seq^1$ is the same with
$w_0$, so $D_2(u_{1,1})$ and $S_2(d_0)$ is identically distributed for both
$\seq^0$ and $\seq^1$. Moreover, observe that the distribution of $X$ depends
on only $u_{1,1}$ because the ORAMWrite samples $U_2$ at random. Finally,
$\Pr[\FF_2]$ is always the same, since the number of free blocks in the
second ORAMWrite is always the same with $N$.  

Based on the table and the above observation, we have the following:

\begin{eqnarray*}
p^0 - p^1 & \ge & \Pr[c3 \wedge X](q_y - q_n) - \Pr[c6 \wedge X](q_y - q_n)  \\
& \ge & \frac {k} {2N} \cdot \big( \Pr[c3 \wedge X] - \Pr[c6 \wedge X] \big) \\
& = & \frac {k} {2N} \cdot \Pr[X] \cdot \big( \Pr[c3 | X] - \Pr[c6 | X] \big) \\
& \ge & \frac {k} {4N} \cdot \big( \Pr[c3 | X] - \Pr[c6 | X] \big) \\
\end{eqnarray*}

Now, let's first calculate the lower bound on $\Pr[c3 | X]$.  If the first ORAMWrite
chooses at least one block from the free set and
writes $d_0$ in $u_{1,1}$, it must be $D_2(u_{1,1}) = d_0$ and $S_2(d_0)=0$.  
Therefore, 
$$\Pr[D_2(u_{1,1}) = d_0, S_2(d_0) = 0] \ge \frac 1 2 \cdot \frac 1 k.$$
Moreover, at least probability $\frac 1 2$, the second ORAMWrite will find a
block from the freeset, which implies that $$\Pr[c3 | X ] \ge \frac 1 {4k}.$$

To calculate the upper bound on $\Pr[c6 | X]$, observe that $D_2(u_{1,1}) = d^*_1$
implies that $u_{1,1}$ contained $d^*_1$ even before the first
ORAMWrite $w_0$.  Therefore, we have 

$$\Pr[c6 | X ] \le \Pr[\mbox{ $u_{1,1}$ has $d^*_1$ before the 1st ORAMWrite}] = \frac 1 {2N}.$$

Therefore, we have
$$p^0 - p^1 \ge \frac {k} {4N} \cdot \Big ( \frac 1 {4k} - \frac 1 {2N} \Big) = \frac {N-2k} {4N^2}.$$

\section{Related Work}
\label{sec:related}

\paragraph{Oblivious RAM (ORAM) and applications.}
ORAM protects the access pattern so that it is infeasible to guess which
operation is occurring and on which item. Since the seminal work by Goldreich
and Ostrovsky~\cite{GO96}, many works have focused on improving efficiency and
security of ORAM (for example
\cite{CCS:SDSFRY13,RFK+15,CCS:MoaMayBla15,SP:SZALT16} just to
name a few; see the references therein).
 
ORAM plays as an important tool to achieve secure cloud
storage~\cite{NDSS:SteShi13,SP:SteShi13,NDSS:MayBlaCha14} and secure
multi-party
computation~\cite{CCS:GKKKMR12,SP:LHSKH14,SP:LWNHS15,CCS:WanChaShi15,SP:ZWRGDE16}
and secure processors~\cite{freecursive,ghostrider,hop}.  There also have been
works to hide the access pattern of protocols accessing individual data
structures, e.g., maps, priority queues, stacks, and queues and graph
algorithms on the cloud
server~\cite{PODC:Toft11,ASIACCS:BlaSteAli13,CCS:WNLCSS14,SP:RocAviCho16}.  The
work of \cite{JMTS16} considers obliviousness in the P2P content sharing
system. 

\paragraph{Write-only obliviousness}
Blass~et~al.~\cite{CCS:BMNO14} considers write-only ORAM (WoORAM), and gave a
WoORAM construction much more efficient than the traditional ORAM
constructions. They applied WoORAM to deniable storage scenarios and gave a
WoORAM-based construction of hidden volume encryption (HiVE). 
Aviv~et~al.~\cite{oblivisync} gave a construction of oblivious synchronization
and backup for the cloud environment. They observed that write-only
obliviousness is sufficient for the scenario, since the client stores a
complete local copy of his data, and therefore $read$ accesses are naturally
hidden from the adversary.

\paragraph{Deniable storage}
Anderson et al.~\cite{ANS98} proposed steganography-based approaches, that is,
hiding blocks within cover files or random data. There are
works based on his suggestion~\cite{StegFS,StegFS2}, but they don't
allow deniability against a snapshot adversary.

Another approach is hidden volumes. Unfortunately, existing solutions such
as TrueCrypt (discontinued now)~\cite{truecrypt}, Mobiflage~\cite{mobiflage}
and MobiPluto~\cite{mobipluto} are secure only against a single-snapshot
adversary. 
HIVE~\cite{CCS:BMNO14} provides security even against a multiple-snapshot
adversary.
DEFY~\cite{NDSS:PetGonPet15} is the deniable log-structured file system
specifically designed for flash-based, solidstate drives; although it is secure
against a multiple-snapshot adversary, it doesn't scale well.

\section{Conclusion}
We presented \ourworam{}, a stash-free, deterministic write-only
oblivious ORAM with sequential write patterns. This scheme achieves
asymptotic improvement in write communication costs, $O(B\log N)$,
requiring exactly $2B$ physical writes per logical write. We further
showed that prior schemes to improve on \travisworam{} are
insecure. Finally, we implemented and evaluated \ourworam{}, and, for
sequential writing, it incurs only a 3-4.5x overhead on HDD and 10-14x
on SSD compared to using encryption only. It is 19x faster than
\travisworam{}, the previous best, secure scheme. It is also
practical; the theoretical write complexity is optimal, and
\ourworam{} with an SSD backend has similar (sometimes better)
performance compared to using
encryption only on a spinning-platter HDD in a similar price range.

\begin{acks}
  The authors thank the CCS program committee for their valuable
  suggestions.
  This work is supported by the
  \grantsponsor{onr}{Office of Naval Research}{https://www.onr.navy.mil/}
  under awards
  \grantnum{onr}{N0001416WX01489} and \grantnum{onr}{N0001416WX01645},
  and by the
  \grantsponsor{nsf}{National Science Foundation}{https://www.nsf.gov/}
  under awards
  \grantnum{nsf}{1618269} and
  \grantnum{nsf}{1319994}.
\end{acks}

\bibliographystyle{ACM-Reference-Format}
\balance


\end{document}